\documentclass[onecolumn]{IEEEtran}
\usepackage[T1]{fontenc}
\usepackage[utf8]{inputenc}

\usepackage{amsmath,amssymb,amsfonts,amsthm,mathtools}
\usepackage{bm}
\usepackage{bbm}
\usepackage{mathrsfs}
\usepackage{dsfont}
\usepackage{cancel}

\usepackage{graphicx}
\usepackage{booktabs}
\usepackage{cite}
\usepackage[hidelinks]{hyperref}
\usepackage[nameinlink,noabbrev,capitalise]{cleveref}
\usepackage{booktabs}
\usepackage{subcaption}
\usepackage{siunitx}
\newtheorem{theorem}{Theorem}
\newtheorem{lemma}{Lemma}

\newtheorem{corollary}{Corollary}
\theoremstyle{definition}
\newtheorem{definition}{Definition}
\newtheorem{remark}{Remark}

\title{Information Rate Decomposition for Noisy Nanopore Channels with Geometric Duplication}

\author{Brendon McBain and Emanuele Viterbo
\thanks{The authors are with the Department of Electrical and Computer Systems Engineering, Monash University, Melbourne, VIC 3800, Australia (e-mail: \texttt{brendon.mcbain@monash.edu}; \texttt{emanuele.viterbo@monash.edu}). A preliminary version of part of this work appeared in \cite{McBain2026}.}}

\begin{document}

\maketitle

\begin{abstract}
This paper studies information rates of noisy duplication channels with memory, motivated by nanopore DNA sequencing. In nanopore sequencing, the measured signal is affected by both inter-symbol interference (ISI), caused by multiple DNA bases residing in the pore, and random sample duplications, where variable translocation speed causes each base to generate a random number of samples. These two effects make direct theoretical analysis difficult. To address this, we derive a new decomposition of the information rate into two interpretable terms: one {capturing the channel memory through an auxiliary ISI channel}, and another {capturing} the uncertainty in the segment boundaries caused by random duplications. This decomposition separates the dominant channel distortions and replaces the direct analysis of the full channel with two more readily {tractable} components. We then study the second term through a soft alignment functional closely related to Soft-DTW, which {yields a strong asymptotic equipartition property result and an alternative proof of the Markov-constrained coding theorem}. Finally, we develop a lower bound on the information rate that depends on the distribution of jump distances between adjacent nanopore levels. This bound gives a simple geometric explanation of channel synchronisability and provides a tractable framework for computing achievable rates of Oxford nanopore sequencers.
\end{abstract}

\section{Introduction}

DNA-based data storage has emerged as a promising technology for archival
information storage in synthetic DNA molecules, and has motivated a growing
body of work at the interface of information theory, coding, and molecular
communications \cite{Sabary2024,Milenkovic2024}. A central component of any
DNA storage system is the sequencing device used during data retrieval. Among
current sequencing technologies, the nanopore sequencer developed by Oxford
Nanopore Technologies (ONT) is particularly attractive since it combines
real-time readout, ultra-long reads, and portable low-cost hardware
\cite{ONT}.

{In nanopore sequencing, a DNA strand translocates through a
nanopore while the ionic current is sampled. The current depends on the group
of bases residing inside the pore, which we refer to as the \emph{nanopore
state}, while its expected current is the corresponding \emph{pore level}.
Repeated samples from the same state form a signal segment.}
However, from an information-theoretic perspective, the nanopore sequencer
presents a challenging read channel \cite{BITSDNA}. Firstly,
{each current sample depends on multiple bases, such that}
the input bases experience inter-symbol interference (ISI). Secondly, the
{\em translocation speed} of the bases through the pore is variable, resulting
in duplications of the nanopore state at the sample level. Therefore, the
nanopore sequencer is naturally modelled as a {\em noisy duplication channel
with memory} \cite{McBain2022,McBain2024a}, following earlier signal-level
modelling approaches for nanopore sequencers \cite{Mao2018}.
{At the signal level, the sampled current is modelled directly,
whereas a \emph{basecaller} converts this signal into an estimated base
sequence.} After basecalling, the induced sequence-level errors may
alternatively be approximated by an insertion-deletion-substitution (IDS)
channel with memory \cite{Hamoum2023,Welter2026}. Here, we study the underlying
signal-level channel. Understanding achievable information rates over this
channel is therefore a key step toward characterising the limits of
nanopore-based DNA storage systems.

This paper extends the theoretical foundations of a class of noisy duplication
channels that includes the nanopore sequencer as an important special case,
namely the noisy nanopore (duplication) channel (NNC)
\cite{McBain2022,McBain2024a}. In this model, an input state sequence evolves
according to a Markov source on a de Bruijn state-space, representing the
action of shifting bases into the nanopore, where the bases inside the nanopore
uniquely specify the channel state. Due to the random translocation speed, each
channel state is duplicated for a randomly distributed number of samples. The
duplicated channel states are mapped to reference current levels via a {\em pore
model}, and are then corrupted by additive white Gaussian noise (AWGN). The
channel output is therefore a noisy piecewise-constant signal that is stretched
by random sample duplications. Additive noise blurs the {\em jumps} between
successive levels, while the random duplications make the level-segment
boundaries unknown. Consequently, there are two principal challenges in
analysing achievable information rates for such channels: channel memory and
random sample duplications. Removing either distortion leads to either an ISI
channel or a duplication channel, both of which are already challenging in
their own right. The results developed in this paper therefore focus on
characterising how these two distortions interact in the information rate.

Previous work established achievable information rates and coding theorems for
noisy nanopore channels under ergodic Markov inputs, showing that the
Markov-constrained capacity is the maximum Shannon mutual information rate over
all ergodic Markov sources \cite{McBain2024b}. This theoretical framework was
later used to empirically investigate information rates for the nanopore
simulator Scrappie \cite{McBain2024a}, and was validated on ONT sequencing
data through estimation of achievable rates under mismatched decoding
\cite{McBain2025JSAIT}. Capacity bounds for more general noisy nanopore
channels were subsequently developed in \cite{Rameshwar2025}; these bounds are
particularly useful in high-sampling-rate or low-noise regimes. Nevertheless,
the theoretical analysis of information rates remains difficult, particularly
when seeking accurate performance evaluations for nanopore systems.

The main contribution of this paper is a new decomposition of the information
rate for noisy duplication channels with geometric duplication. The
decomposition separates the effect of channel memory from the synchronisation
uncertainty caused by random sample duplications. Specifically, we show that
the information rate can be expressed as the difference between two
interpretable terms. {
While the decomposition follows directly from the chain rule, its key consequence is that the two resulting rate terms admit separate tractable
representations.} The first term is the achievable information rate of an auxiliary ISI channel,
which captures the memory of the nanopore signal at the sample level, while
the effect of sample duplications is accounted for separately. The second term
is a {\em soft alignment penalty} that captures uncertainty in the unknown boundaries
between segments, where each segment is a consecutive run of duplicated
observations generated from the same underlying current level and channel
state. This reformulation
replaces the direct analysis of the full noisy duplication channel with two more interpretable components: a hidden-Markov entropy rate term
\cite{CoverThomas} and a Soft-DTW-type alignment functional
\cite{Cuturi2017}. {For a nanopore state $s$, let $f(s)$ denote its associated
pore-model current level. We define the \emph{jump distance} between adjacent
states $s'$ and $s$ as
$J(s',s)=|f(s)-f(s')|$, which quantifies the separation between successive
nanopore levels.}

The specific contributions of this paper are as follows.
\begin{itemize}
    \item We derive a new information rate decomposition for noisy duplication channels with geometric duplication. The decomposition separates the channel memory induced by the nanopore state sequence from the synchronisation uncertainty induced by random sample duplications.

    \item We introduce a soft alignment functional, closely related to the
Soft-DTW loss, whose normalised limit characterises the soft alignment
penalty in the decomposition. We prove {a strong analogue of
the asymptotic equipartition property (AEP)} for this functional. {Together with the strong AEP for the stationary ergodic auxiliary ISI channel \cite{CoverThomas}, this yields an alternative proof of the
Markov-constrained coding theorem.}

    \item We show that the decomposition
    {yields separate Monte Carlo evaluations of its two terms}.
    The auxiliary ISI term can be evaluated using the
    {hidden-Markov} forward algorithm, while the soft alignment
    penalty can be computed using a Soft-DTW-type dynamic program.
    {The decomposition also enables the local jump-reliability
    bound developed below, which avoids the global soft-alignment calculation.}

    \item We derive an upper bound on the soft alignment penalty in terms of
    the jump distance distribution between adjacent nanopore levels. This gives a
    lower bound on the information rate, which we call the jump-reliability
    bound, and provides a geometric explanation of nanopore synchronisability.

    \item We apply the resulting bound to estimate achievable rates for a range
    of ONT nanopore sequencers released over the years, enabling comparisons
    that were not feasible using previous techniques.
\end{itemize}

The jump-reliability bound is the central practical consequence of the decomposition. It formalises the empirical intuition that larger and more distinguishable jumps between adjacent nanopore levels improve synchronisation, and therefore improve achievable information rates \cite{McBain2022}. In this way, the proposed framework connects information-theoretic performance directly to the geometry of the pore model and the translocation dynamics of the sequencer. The resulting lower bound is especially useful in sequencing regimes of practical interest, where direct information rate estimation is computationally demanding. This is especially important for modern pore models such as R10.4.1, whose large state space makes direct information rate estimation computationally prohibitive. The jump-reliability bound reduces this analysis to the geometry of the nanopore level mapping, through statistics of adjacent level jumps.

The remainder of the paper is organised as follows. In Section~\ref{sec:soft_alignment}, we introduce the soft alignment
functionals used throughout the paper and establish the shift-subadditivity property needed later for the strong-AEP analysis. In
Section~\ref{sec:channel_model}, we define the noisy geometric duplication channel and its sample-level Markov representation. In
Section~\ref{sec:rates}, we introduce the relevant information and soft alignment rates, and prove the strong AEP for the soft alignment rate. In Section~\ref{sec:decomposition}, we derive the information rate decomposition, which expresses the achievable information rate as an auxiliary ISI channel term minus a segmentation penalty. In Section~\ref{sec:bounds}, we use this decomposition to develop computable lower bounds, including the jump-reliability bound. Finally, in Section~\ref{sec:numerical_results}, we evaluate achievable rates for noisy nanopore channels with parameters based on ONT sequencers.

\subsection{Notation}
The set of real numbers is denoted by \(\mathbb{R}\) and the set of non-negative integers by \(\mathbb{N}\); subscripts indicate additional conditions, e.g., \(\mathbb{N}_{\geq 1}\) denotes the positive integers. An arbitrary-length sequence is denoted by \((x_m)_{m\geq 1}\), while an \(m\)-length sequence is denoted by \(x^m\). {For integers \(1\leq i\leq j\), we denote the subsequence \((x_i,x_{i+1},\ldots,x_j)\) by \(x_i^j\), so that \(x^m=x_1^m\).} Sequences may also be written in bold, e.g., \(\mathbf{x}\). For a real-valued sequence \(\mathbf{x}\) and a sequence of positive integers \(\mathbf{k}\) with the same length, the sequence \(\mathbf{x}^{\mathbf{k}}\) is obtained by repeating \(x_\ell\) for \(k_\ell\) samples and concatenating the resulting blocks over \(\ell\in\{1,2,\ldots,m\}\). Throughout, \(\log\) denotes the base-$2$ logarithm and \(\ln\) denotes the natural logarithm. The binary entropy function is \(h_b(x)=-x\log(x)-(1-x)\log(1-x)\). 

\section{Soft Alignment Functionals}\label{sec:soft_alignment}

\begin{figure*}
    \centering
    \includegraphics[width=0.8\linewidth]{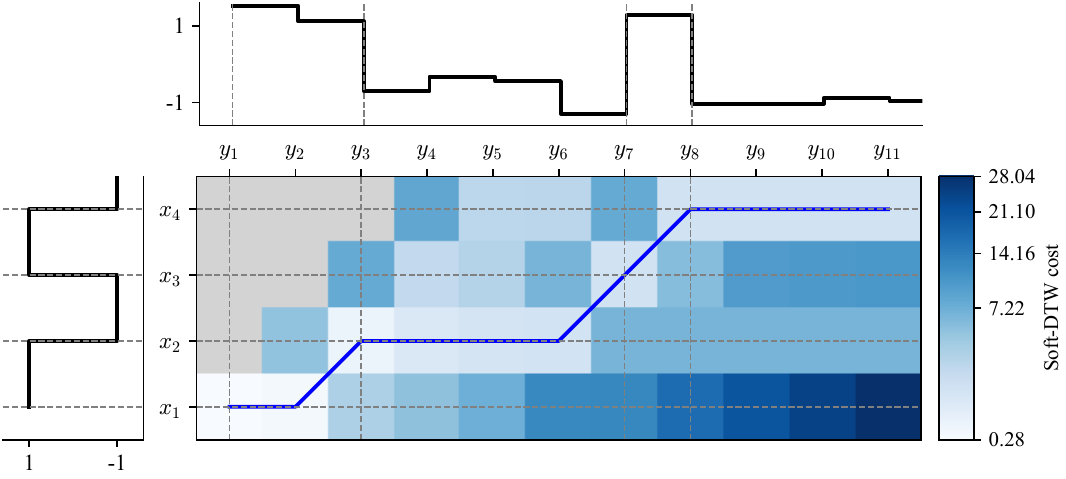}
    \caption{Example of the duplication-only Soft-DTW dynamic-programming
    matrix for piecewise-constant signals. {The signals above and to the left of the grid are the two
sequences being aligned; the levels $1$ and $-1$ are illustrative signal
levels.} Light-blue cells indicate likely alignment paths, dark-blue cells indicate unlikely paths terminating at the corresponding positions, and gray cells indicate inadmissible terminal states. The blue line shows the minimum-cost DTW path, while dashed gray lines mark segment boundaries.}
    \label{fig:softdtw}
\end{figure*}

{
The soft alignment functionals introduced in this section arise naturally
from the noisy duplication channel considered in this paper. In this channel,
a variable-length output can admit multiple possible segmentations, and the
channel likelihood requires marginalisation over these segmentations. This
marginalisation has the same soft-min structure as Soft-DTW, motivating the
alignment framework introduced here. More general soft alignment functionals
are possible, but the Soft-DTW structure is particularly natural here because
its soft-min operation matches the marginalisation over possible segmentations.
The exact correspondence with the channel likelihood is established after the
channel model is introduced
in Section \ref{sec:channel_model}.}

Soft-DTW is a smooth relaxation of dynamic time warping (DTW), which compares
two sequences while allowing for local misalignment. DTW minimises cost over
alignment paths that include local repetition, making it well suited to
settings in which repeated observations of the same state arise naturally.
In our setting, these repetitions correspond directly to sample duplications
in the channel.

The key idea of Soft-DTW is to replace the minimum over alignment costs in DTW
by a soft minimum over all admissible paths. Consequently, the resulting cost
depends not only on the single minimum-cost alignment, but on all admissible
alignments, with lower-cost paths weighted more heavily. The temperature
parameter \(\gamma\) controls this interpolation, with the hard DTW limit
recovered as \(\gamma\to0^+\).

Given sequences \(x^m\) and \(y^{t_m}\), an {\em alignment path} is a sequence
\begin{align*}
    \pi = \{(i_1,j_1), (i_2,j_2),\ldots,(i_L,j_L)\}
\end{align*}
satisfying \((i_1,j_1) = (1,1)\), \((i_L,j_L) = (m,t_m)\), and
\begin{align*}
    (i_{r+1},j_{r+1}) - (i_r,j_r) \in
        {\{(0,1),(1,1)\}}.
    \end{align*}
That is, it is a path from initial point $(1,1)$ to termination point
$(m,t_m)$ that hits $L$ coordinates in a two-dimensional lattice. The set of
all such paths is denoted by \(\Pi_{m,t_m}\). Relative to the standard
Soft-DTW formulation, we slightly restrict the admissible paths by excluding
\((1,0)\). Thus, each input symbol must be aligned to at least one output
sample, while repeated observations in the output samples are represented
through horizontal moves, and therefore $L=t_m$.
{Under this restriction, the admissible paths are in
one-to-one correspondence with admissible segmentations of the output.}

The cost of a path is
\begin{align*}
    C(\pi) = \sum_{r=1}^{L} c(x_{i_r},y_{j_r}),
\end{align*}
where \(c(\cdot,\cdot)\) is the local cost. {Throughout this
paper, we use the squared Euclidean cost
\(c(x,y)=(x-y)^2\), corresponding to a Gaussian likelihood.}

\begin{definition}[Alignment functionals \cite{Cuturi2017}] 
The Soft-DTW soft alignment functional is
\begin{align}
    \Psi_{\gamma}(x^m,y^{t_m})
    = -\gamma \ln\left(\sum_{\pi\in\Pi_{m,t_m}} e^{-C(\pi)/\gamma} \right).
\end{align}
for \(\gamma>0\). At \(\gamma = 0\), we have the DTW hard alignment functional
\begin{align}
    \Psi_{0}(x^m,y^{t_m})
    = \lim_{\gamma \rightarrow 0^+} \Psi_{\gamma}(x^m,y^{t_m}) = \min_{\pi\in \Pi_{m,t_m}} C(\pi).
\end{align}
\end{definition}

For squared Euclidean cost, the Soft-DTW functional is differentiable, although in general it is neither convex nor concave. In contrast, the low-temperature limit DTW is non-differentiable. This smoother behaviour has motivated the use of Soft-DTW in the machine learning literature~\cite{Cuturi2017}. Fig.~\ref{fig:softdtw} illustrates the computation of Soft-DTW using a dynamic-programming matrix.

The following lemma quantifies the gap between the soft alignment functional \(\Psi_\gamma\) and its hard-alignment limit \(\Psi_0\).

\begin{lemma}[Soft-min inequality \cite{Cuturi2017}]\label{lemma:soft_min_inequality}
For every \(\gamma>0\),
\begin{align}
    \Psi_0(x^m,y^{t_m}) - \gamma \ln |\Pi_{m,t_m}|
    \le
    \Psi_\gamma(x^m,y^{t_m})
    \le
    \Psi_0(x^m,y^{t_m}).
\end{align}
Hence \(\Psi_\gamma(x^m,y^{t_m}) \to \Psi_0(x^m,y^{t_m})\) as \(\gamma\to 0^+\).
\end{lemma}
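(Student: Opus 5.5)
The plan is to bound the sum $S=\sum_{\pi\in\Pi_{m,t_m}} e^{-C(\pi)/\gamma}$ from above and below by elementary comparisons with its largest term, and then take $-\gamma\ln(\cdot)$, which is decreasing. First I will note that $\Pi_{m,t_m}$ is finite and, whenever $t_m\ge m$, nonempty. Every admissible path has exactly $L=t_m$ steps, of which $m-1$ are diagonal moves, so $|\Pi_{m,t_m}|=\binom{t_m-1}{m-1}$. This guarantees that the minimum $\Psi_0=\min_\pi C(\pi)$ is attained by some path $\pi^\star$, and that $\ln|\Pi_{m,t_m}|$ is finite. If $t_m<m$, the set is empty and both sides are $+\infty$ by convention; I will simply restrict to $t_m\ge m$.

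For the upper bound $\Psi_\gamma\le\Psi_0$, I will keep only the term of $\pi^\star$. Since all terms are positive, $S\ge e^{-C(\pi^\star)/\gamma}=e^{-\Psi_0/\gamma}$. Applying $-\gamma\ln$ then gives $\Psi_\gamma\le\Psi_0$.

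For the lower bound, I will use $C(\pi)\ge\Psi_0$ for every $\pi$. This gives $e^{-C(\pi)/\gamma}\le e^{-\Psi_0/\gamma}$, so $S\le|\Pi_{m,t_m}|\,e^{-\Psi_0/\gamma}$. Applying $-\gamma\ln$ yields $\Psi_\gamma\ge\Psi_0-\gamma\ln|\Pi_{m,t_m}|$.

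The convergence claim then follows by squeezing. For fixed $x^m$ and $y^{t_m}$, the quantity $\gamma\ln|\Pi_{m,t_m}|$ does not depend on $\gamma$ except through the factor $\gamma$, so it tends to $0$ as $\gamma\to0^+$. Both bounds therefore tend to $\Psi_0$, which also justifies the identification $\Psi_0=\lim_{\gamma\to0^+}\Psi_\gamma$ in the definition. There is no real obstacle here. The only points needing care are nonemptiness and finiteness of the path set, and the direction of the inequalities under the decreasing map $-\gamma\ln$.
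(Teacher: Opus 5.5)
Your proof is correct and is the standard soft-min argument: you bound $\sum_{\pi} e^{-C(\pi)/\gamma}$ below by its largest term $e^{-\Psi_0/\gamma}$ and above by $|\Pi_{m,t_m}|\,e^{-\Psi_0/\gamma}$, then apply the decreasing map $-\gamma\ln(\cdot)$ and squeeze as $\gamma\to0^+$. The paper gives no proof of its own and imports the lemma from Cuturi and Blondel, where it rests on this same comparison; your count $|\Pi_{m,t_m}|=\binom{t_m-1}{m-1}$ also matches the one the paper uses later in its strong-AEP proof.
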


A key property of alignment functionals is that they are subadditive, up to shifts in the indices of the input sequences. This leads to the following shift-subadditive lemma.

\begin{lemma}[Shift-subadditivity]\label{lemma:subadditivity}
For every \(m_1,m_2,t_1,t_2 \in \mathbb{N}\) satisfying \(t_1 \ge m_1\) and \(t_2 \ge m_2\), and every \(\gamma \in \mathbb{R}_{\ge 0}\), the alignment functional \(\Psi_\gamma\) is shift-subadditive in the sense that
\begin{align}
    \Psi_{\gamma}(x^{m_1+m_2},y^{t_1+t_2})
    &\le
    \Psi_{\gamma}(x^{m_1},y^{t_1}) +
    \Psi_{\gamma}(x_{m_1+1}^{m_1+m_2},y_{t_1+1}^{\,t_1+t_2}).
\end{align}
\end{lemma}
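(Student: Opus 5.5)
The plan is to build an injective concatenation map from pairs of sub-paths into full paths, check that it is additive in cost, and then compare the resulting sums. Fix $m_1,m_2,t_1,t_2$ with $t_1\ge m_1$ and $t_2\ge m_2$; this guarantees that $\Pi_{m_1,t_1}$ and $\Pi_{m_2,t_2}$ are nonempty.

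\textbf{Step 1: the concatenation map.} Take $\pi_1=\{(i_r,j_r)\}_{r=1}^{t_1}\in\Pi_{m_1,t_1}$, aligning $x^{m_1}$ with $y^{t_1}$. Take $\pi_2=\{(i'_r,j'_r)\}_{r=1}^{t_2}\in\Pi_{m_2,t_2}$, aligning $x_{m_1+1}^{m_1+m_2}$ with $y_{t_1+1}^{t_1+t_2}$. Define
\begin{align*}
\pi_1\oplus\pi_2=\{(i_r,j_r)\}_{r=1}^{t_1}\cup\{(m_1+i'_r,\,t_1+j'_r)\}_{r=1}^{t_2}.
\end{align*}
This is a valid path in $\Pi_{m_1+m_2,t_1+t_2}$:
\begin{itemize}
\item it starts at $(1,1)$ and ends at $(m_1+m_2,t_1+t_2)$;
\item every step inside either block lies in $\{(0,1),(1,1)\}$;
\item the junction step goes from $(m_1,t_1)$ to $(m_1+1,t_1+1)$, which is the admissible diagonal move $(1,1)$.
\end{itemize}
The map is injective, because $\pi_1$ and $\pi_2$ are recovered as the first $t_1$ and the last $t_2$ points of the path.

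\textbf{Step 2: additivity of cost.} The local cost of the shifted point $(m_1+i',t_1+j')$ for the full sequences equals $c(x_{m_1+i'},y_{t_1+j'})$. This is exactly the cost of $(i',j')$ for the shifted subsequences. Hence
\begin{align*}
C(\pi_1\oplus\pi_2)=C(\pi_1)+C(\pi_2).
\end{align*}

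\textbf{Case $\gamma>0$.} Because every summand is positive and the map is injective, the full sum dominates the sum over the image of $\oplus$:
\begin{align*}
\sum_{\pi\in\Pi_{m_1+m_2,t_1+t_2}}e^{-C(\pi)/\gamma}\ \ge\ \sum_{\pi_1,\pi_2}e^{-(C(\pi_1)+C(\pi_2))/\gamma}=\Big(\sum_{\pi_1}e^{-C(\pi_1)/\gamma}\Big)\Big(\sum_{\pi_2}e^{-C(\pi_2)/\gamma}\Big).
\end{align*}
Applying $-\gamma\ln(\cdot)$, which is decreasing, turns this into the claimed inequality.

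\textbf{Case $\gamma=0$.} Choose minimising paths $\pi_1^*$ and $\pi_2^*$. Then
\begin{align*}
\Psi_0(x^{m_1+m_2},y^{t_1+t_2})\le C(\pi_1^*\oplus\pi_2^*)=\Psi_0(x^{m_1},y^{t_1})+\Psi_0(x_{m_1+1}^{m_1+m_2},y_{t_1+1}^{t_1+t_2}).
\end{align*}
Alternatively, this case follows by letting $\gamma\to0^+$ in the $\gamma>0$ case, using Lemma~\ref{lemma:soft_min_inequality}.

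\textbf{Main obstacle.} The only delicate point is the junction step in Step 1. The paper's admissible moves exclude $(1,0)$, so one must confirm that the path can pass from $(m_1,t_1)$ to $(m_1+1,t_1+1)$. It can, because that is precisely the diagonal move $(1,1)$. This move also places the boundary between $x_{m_1}$ and $x_{m_1+1}$ exactly at output time $t_1$, which is what makes the shifted indexing of the second functional correct.
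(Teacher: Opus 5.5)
Your proof is correct and follows the same route as the paper. You concatenate a path in $\Pi_{m_1,t_1}$ with a shifted path in $\Pi_{m_2,t_2}$ through the diagonal junction step, use additivity of cost to bound the full sum below by the product of the two sums, and apply the decreasing map $-\gamma\ln(\cdot)$; the $\gamma=0$ case follows either directly via minimisers or by letting $\gamma\to0^+$, and your explicit injectivity check makes precise a step the paper leaves implicit.
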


\begin{proof}
{Let
\(\widetilde{\Pi}^{(m_1,t_1)}_{m_1+m_2,t_1+t_2}\)
denote the subset of paths obtained by concatenating a path in
\(\Pi_{m_1,t_1}\) with a shifted path in \(\Pi_{m_2,t_2}\), connected by
the diagonal step from \((m_1,t_1)\) to \((m_1+1,t_1+1)\).}
Every such path is admissible, and its cost is the sum of the two
constituent path costs. Hence
\begin{equation*}
\begin{aligned}
\sum_{\pi\in\Pi_{m_1+m_2,t_1+t_2}}
e^{-C(\pi)/\gamma}
&\ge
\sum_{\pi\in\widetilde{\Pi}^{(m_1,t_1)}_{m_1+m_2,t_1+t_2}}
e^{-C(\pi)/\gamma}
\\
&=
\left(
\sum_{\pi_1\in\Pi_{m_1,t_1}}
e^{-C(\pi_1)/\gamma}
\right)
\left(
\sum_{\pi_2\in\Pi_{m_2,t_2}}
e^{-C(\pi_2)/\gamma}
\right).
\end{aligned}
\end{equation*}
Applying \(-\gamma \ln(\cdot)\) to both sides, and using that \(-\gamma\ln(\cdot)\) is decreasing, yields the inequality.

For \(\gamma=0\), the same inequality follows by taking the limit \(\gamma \to 0^+\), or directly from the fact that restricting the minimum over \(\Pi_{m_1+m_2,t_1+t_2}\) to the subset of paths passing through \((m_1,t_1)\) can only increase its value.
\end{proof}

Although noisy duplication sequences are not ergodic in the usual sense, the shift-subadditive lemma is setting up the key structural property needed to apply Kingman’s subadditive ergodic theorem \cite{Kingman1968} when the input sequences are stochastic later in the paper.

\section{Noisy Geometric Duplication Channels}\label{sec:channel_model}

Consider an arbitrary channel input $B^m$ on alphabet $\mathcal{B}$, which uniquely specifies the channel/source state sequence $S^m$ on the state-space $\Omega$ with $|\Omega|=|\mathcal{B}|^{\tau}$ for memory constraint $\tau$ (i.e., the channel input/source depends on the current input and the previous $\tau-1$ inputs). The set of admissible state sequences is denoted by $\mathcal{S}^{(m)}$. The initial state $S_0$ has distribution $q_0(s_0)$ for all $s_0 \in \Omega$. {For the ergodic Markov sources considered below, $q_0$
denotes the stationary distribution.}

The states in a sequence $S^m$ are duplicated according to the sample durations $K^m=(K_1,\ldots,K_m)$ to give the duplicated states, or {\em sample states}, as
\begin{align}
    Z^{T_m}
    =\big(&\underbrace{S_1,\ldots,S_{1}}_{K_1},
    \underbrace{S_{2},\ldots,S_{2}}_{K_2}, \ldots, \underbrace{S_{m},\ldots,S_{m}}_{K_m}\big), \notag
\end{align}
which retains the support $\Omega$. The durations $K_{\ell}$ are i.i.d. geometric random variables ${\rm Geom}(1/\mu)$ on $\mathbb{N}_{\geq 1}$ with mean $\mu$. The number of duplication samples is $T_m =\sum_{\ell=1}^m K_{\ell}$, and the number of duplications $T_m - m$ is the negative binomial random variable ${\rm NegBin}(m,1/\mu)$ with $\mathbb{E}[T_m]=m\mu$ and $\mathrm{Var}(T_m)=m\mu(\mu-1)$. The support of $K^m$ is given by the set of admissible {\em segmentations} as $\mathcal{K}_{m,T_m} = \{k^m\in \mathbb{N}^m_{\geq 1}: \sum_{\ell=1}^{m}k_{\ell}=T_m\}$. For convenience, we define the cumulative output-index sequence
$T^m=(T_1,\ldots,T_m)$ by
\[
    T_\ell = \sum_{i=1}^{\ell} K_i, \qquad \ell=1,\ldots,m .
\]
Thus, $T_\ell$ denotes the cumulative number of output samples produced up to and including
input position $\ell$, or equivalently the index of the final output sample
generated by the first $\ell$ input symbols.

The duplicated states $Z^{T_m}{=(Z_1,\ldots,Z_{T_m})}$ are mapped to real numbers according to the {{injective}} level mapping $f:\Omega\rightarrow\mathbb{R}$, giving the level sequence $f(Z^{T_m}) = (f(Z_1),f(Z_2),\ldots,f(Z_{T_m}))$. {We denote the corresponding unduplicated level sequence by
$X^m=f(S^m)$.} This level mapping captures the ISI between bases through
the dependence of each level on the channel state.
{
Finally, let $N^{T_m}=(N_1,\ldots,N_{T_m})$, where the $N_t$ are independent
$\mathrm{Normal}(0,\sigma^2)$ random variables. The measured levels are then
}
\begin{align}
    Y^{T_m} = f(Z^{T_m}) + N^{T_m}.\notag
\end{align}

Therefore, the noisy geometric duplication channel is concisely described as in the following definition.

\begin{definition}[Noisy geometric duplication channel \cite{McBain2025JSAIT}]\label{def:nnc}
    The noisy duplication channel with AWGN and geometric duplications is  $W^{(m)} : \mathcal{S}^{(m)} \rightarrow \cup_{t\in \mathbb{N}_{\geq m}}\mathbb{R}^{t}$ with channel transition probabilities
\begin{align}\label{eq:W}
    W^{(m)}(\mathbf{y}|\mathbf{s})
    &= \alpha_m \sum_{\mathbf{k}\in\mathcal{K}_{m,t_m}} e^{-\frac{1}{2\sigma^2}||\mathbf{y} - f(\mathbf{s}^\mathbf{k})||^2}
\end{align}
with $\alpha_m = (2\pi \sigma^2)^{-\frac{t_m}{2}} (1-\frac{1}{\mu})^{t_m - m} (\frac{1}{\mu})^m$, for all inputs $\mathbf{s} \in \mathcal{S}^{(m)}$ and all channel outputs $\mathbf{y}\in \cup_{t\in \mathbb{N}_{\geq m}} \mathbb{R}^t$. 
\end{definition}

{
The sum over segmentations in \eqref{eq:W} is precisely the soft alignment
sum introduced in Section~\ref{sec:soft_alignment}. With squared Euclidean
local cost, the corresponding negative log-likelihood is therefore represented,
up to scaling and an additive constant, by the Soft-DTW functional with
temperature $\gamma=2\sigma^2$.
}

In the following, we exclusively consider time-homogeneous Markov inputs \(S^m\) on \(\Omega\) with transition matrix \(Q\), whose \((s',s)\)-entry is \(P_{S\mid S^{-}}(s\mid s')\), {where \(S^{-}\) denotes the
preceding source state}. For such sources, \((S_\ell)_{\ell\geq 1}\) is a Markov process, \((Z^{T_m})_{m\geq 1}\) is a randomly indexed semi-Markov process, and \((Y^{T_m})_{m\geq 1}\) is a randomly indexed hidden semi-Markov process~\cite{Girardin2018}. Importantly, the indexing process \((T_\ell)_{\ell\geq 1}\) is dependent on \(Z^{T_m}\), since it is the counting process corresponding to the number of samples produced after \(\ell\) input symbols.

\begin{remark}[Noisy nanopore channel]
Choosing the state space as the de~Bruijn state space \(\Omega=\{\mathsf{A},\mathsf{T},\mathsf{C},\mathsf{G}\}^\tau\), with transitions defined by shifting by one symbol and appending the next input symbol from the DNA alphabet, recovers the noisy nanopore channel in the geometric-duplication setting~\cite{McBain2024a}. More generally, allowing the duration process \((K_\ell)\) and noise process \((N_t)\) to be arbitrary yields a general noisy nanopore channel model~\cite{Rameshwar2025}.
\end{remark}

In addition, we will need the sample-level duplicated state process \((Z_t)_{t\geq 1}\), which is a semi-Markov process, and the sample-level output process \((Y_t)_{t\geq 1}\), which is a hidden semi-Markov process. Under geometric duplications, however, the sample-level duplicated state process reduces to a Markov process with transition matrix \(R\), whose \((z',z)\)-entry is \(P_{Z\mid Z^{-}}(z\mid z')\). Since
\begin{align}
P_{Z\mid Z^{-}}(z\mid z')
&= \mathbb{P}(Z_{t+1}=z\mid Z_t=z') = \left(1-\frac{1}{\mu}\right)\mathds{1}[z=z'] + \frac{1}{\mu} P_{S\mid S^{-}}(z\mid z'),\notag
\end{align}
it follows that
\begin{align}
    R = \left(1-\frac{1}{\mu}\right)I + \frac{1}{\mu}Q,
\end{align}
where \(I\) denotes the \(|\Omega|\times|\Omega|\) identity matrix.

\section{Information and Soft Alignment Rates}\label{sec:rates}

\subsection{Information Rates}
In this section, we recall the entropy and mutual information rates used to characterise noisy geometric duplication channels.
For a random process \(Y=(Y_1,Y_2,\ldots)\), the classical entropy rate is
defined as
\begin{align}
    \overline{H}(Y) = \lim_{t\to\infty} \frac{1}{t} H(Y^t),
\end{align}
when the limit exists. This definition normalises the entropy of the first
\(t\) output symbols by the number of output symbols. However, for channels
with synchronisation errors, the number of output symbols is generally not
equal to the number of input symbols. Therefore, an entropy rate definition
normalised per input symbol is required.

For the noisy duplication channel, the output produced by \(m\) channel inputs
is \(Y^{T_m}\), where \(T_m\) is the random cumulative number of output samples produced up to and including the \(m\)-th input symbol. Hence, to obtain a quantity with an operational meaning per channel input, the output entropy should be
normalised by \(m\), not by the random output length \(T_m\). Equivalently,
this can be viewed as randomly indexing the output process
\(Y=(Y_1,Y_2,\ldots)\) according to the indexing sequence
\(T=(T_1,T_2,\ldots)\), as in {\em stopped random walks} \cite{Gut2009}. We therefore define the {\em\(T\)-indexed} entropy rate of
\(Y\) as
\begin{align}
    H_T(Y) = \lim_{m\to\infty} \frac{1}{m} H(Y^{T_m}),
\end{align}
{{whose existence was established in
\cite{McBain2024b}.}}

It is important to note that \(\overline{H}(Y)\) and \(H_T(Y)\) are not
necessarily identical. The classical entropy rate \(\overline{H}(Y)\) fixes the
number of observed output symbols \(t\), but does not reveal how many input
symbols, or equivalently how many nanopore states, produced those observations.
In contrast, \(H_T(Y)\) considers the output generated by exactly \(m\) input
symbols, while the corresponding output length \(T_m\) is random. Throughout
the paper, we use this \(T\)-indexed normalisation analogously for the entropy
rates of other processes that depend on \(T\), such as \(H_T(K|Y,S)\).

When \(S^m\) is a Markov process, all \(T\)-indexed entropy rates exist by subadditivity~\cite{McBain2024b}, and they satisfy the usual basic properties of entropy rates. Furthermore, when \(Y\) is a hidden Markov or hidden semi-Markov process, as is the case for Markov sources, \(\overline{H}(Y)\) and \(H_T(Y)\) differ only by a scaling factor converting bits per sample to bits per symbol. This relationship was first established in~\cite{McBain2024b}.
Here, we give an alternative proof, adapted from~\cite{McBain2025a}, {which also establishes a strong AEP for \(H_T(Y)\) from the strong AEP for \(\overline{H}(Y)\). We specialise the argument to geometric duplications and AWGN.}

{
For notational simplicity, $p(Y^{T_m})$ denotes the density of the
random-length continuous output evaluated at its realised value.
}

\begin{theorem}[AEP for output entropy]
\label{theorem:AEP-HY}
For an ergodic Markov source, 
\begin{align}
-\frac{1}{m}\log p(Y^{T_m}) \to \mu \overline{H}(Y)
\qquad \text{a.s.}
\end{align}
as \(m\to\infty\), and
\begin{align}
H_T(Y)=\mu \overline{H}(Y).
\end{align}
\end{theorem}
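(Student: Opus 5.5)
The plan is to reduce the statement to the classical strong AEP (Shannon--McMillan--Breiman) for the sample-level process and then change the time index from the fixed sample count \(t\) to the random stopping time \(T_m\).

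\emph{Step 1: SMB at the sample level.} Under geometric duplications, \((Z_t)_{t\ge1}\) is a Markov chain with transition matrix \(R=(1-\frac1\mu)I+\frac1\mu Q\). Since \(Q\) is irreducible (ergodic source) and \(R\) has positive diagonal, \(R\) is irreducible and aperiodic. Its stationary distribution coincides with that of \(Q\), because \(\pi R=\pi\) if and only if \(\pi Q=\pi\).

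Hence \(Y_t=f(Z_t)+N_t\) is a stationary ergodic hidden Markov process with Gaussian emissions. I would invoke the strong AEP for such processes, as in Leroux or the hidden-Markov Shannon--McMillan--Breiman theorem cited via \cite{CoverThomas}. This gives
\[
g_t:=-\tfrac1t\log p(Y^t)\to\overline H(Y)\quad\text{a.s.}
\]
The densities are bounded above by \((2\pi\sigma^2)^{-t/2}\), and the log-likelihoods have finite moments because the levels are finite. This is enough for the continuous-alphabet version.

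\emph{Step 2: random index.} I need the identity \(p(Y^{T_m})=p_{Y^t}(y^t)\big|_{t=T_m}\) as densities on the disjoint union \(\cup_t\mathbb R^t\). This holds because the event \(\{T_m=t\}\) is not a function of \(Y^t\). The density of the random-length output evaluated at \(y^t\) is the joint density of \(Y^t\) together with the event that exactly \(m\) segments end at \(t\). Concretely,
\[
p(Y^{T_m}=y^t)=p(y^t)\,\mathbb P(T_m=t\mid Y^t=y^t).
\]

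I would then write
\[
-\tfrac1m\log p(Y^{T_m})=\tfrac{T_m}{m}\,g_{T_m}-\tfrac1m\log\mathbb P(T_m=t\mid Y^{t})\big|_{t=T_m}.
\]
By the SLLN, \(T_m/m\to\mu\) a.s., and \(T_m\to\infty\), so \(g_{T_m}\to\overline H(Y)\) a.s. along the subsequence. The first term therefore converges to \(\mu\overline H(Y)\) a.s.

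\emph{Step 3 (main obstacle): the correction term.} It remains to show
\[
\tfrac1m\log \mathbb P(T_m=t\mid Y^t)\big|_{t=T_m}\to0 \quad\text{a.s.}
\]
This quantity is nonpositive, so only a lower bound is needed. Conditionally on \(Y^t\), the number of segment boundaries in \(Z^t\) is a random variable on \(\{1,\dots,t\}\). With \(A_t:=\mathbb P(\#\text{segments}=m\mid Y^t)\), one cannot bound \(A_t\) pointwise below trivially.

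I would try a Borel--Cantelli argument instead. For fixed \(\epsilon>0\),
\[
\mathbb P\big(A_{T_m}<2^{-\epsilon m}\big)=\sum_t \mathbb E\big[\mathbb 1\{N_t=m\}\mathbb 1\{A_t<2^{-\epsilon m}\}\big]=\sum_t\mathbb E\big[A_t\mathbb 1\{A_t<2^{-\epsilon m}\}\big].
\]
Here \(N_t\) denotes the segment count up to \(t\), and the second equality follows by conditioning on \(Y^t\). Restricting to \(t\le 2\mu m\), each summand is at most \(2^{-\epsilon m}\), contributing \(2\mu m\,2^{-\epsilon m}\). The remaining tail is bounded by \(\mathbb P(T_m>2\mu m)\), which is exponentially small by a Chernoff bound on the negative binomial. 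Both pieces are summable in \(m\), so Borel--Cantelli gives \(\liminf\frac1m\log A_{T_m}\ge-\epsilon\) a.s., and the a.s. AEP follows.

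\emph{Step 4: entropy rate.} I would take expectations:
\[
\tfrac1m H(Y^{T_m})=\mathbb E\big[-\tfrac1m\log p(Y^{T_m})\big].
\]
To pass to the limit, I would establish uniform integrability. For the upper part, use the Gaussian density bound together with \(\mathbb E[T_m^2]/m^2\) bounded. For the lower part, use \(-\log p(y^t)\le \frac{t}{2}\log(2\pi\sigma^2)+\frac{\log e}{2\sigma^2}\sum_i (y_i-f_{\min/\max})^2\)-type bounds, whose normalised expectations are uniformly bounded in \(L^2\). The correction term is controlled by the same bounds. Dominated or Vitali convergence then gives \(H_T(Y)=\mu\overline H(Y)\); the existence of this limit was already established in \cite{McBain2024b}.
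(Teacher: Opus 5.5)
Your proposal is correct, and through Step 3 it follows the paper's proof. It uses the same factorisation $p(Y^{T_m})=p_{T_m}(Y^{T_m})\,\mathbb{P}(T_m=t\mid Y^t)|_{t=T_m}$, SMB evaluated along $T_m$ with $T_m/m\to\mu$ a.s., and the same Borel--Cantelli argument ($\mathbb{P}(T_m=t,\,A_t)\le 2^{-\epsilon m}$ summed over $t\le cm$, plus a Chernoff tail, with your choice $c=2\mu$).

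There are two slips to fix.
\begin{itemize}
\item In Step 2 the logic is backwards. You say $p(Y^{T_m})=p_{Y^t}(y^t)|_{t=T_m}$ ``holds because'' $\{T_m=t\}$ is not $Y^t$-measurable, but that is exactly why this naive identity fails. Your displayed factorisation is the correct one.
\item In Step 3 you replace $\mathbb{P}(T_m=t\mid Y^t)$ by $A_t=\mathbb{P}(N_t=m\mid Y^t)$. Since $\{T_m=t\}\subsetneq\{N_t=m\}$, your first ``$=$'' is only ``$\le$''. Transferring the conclusion back then needs memorylessness, $\mathbb{P}(T_m=t\mid Y^t)=\mu^{-1}A_t$. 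If you define $A_t:=\mathbb{P}(T_m=t\mid Y^t)$ directly, as the paper does, every step is an equality.
\end{itemize}

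The real difference is in Step 4. The paper proves uniform integrability only for $-\frac{1}{m}\log p_{T_m}(Y^{T_m})$. It handles the endpoint term in expectation by concavity of $-x\log x$:
$\mathbb{E}[-\log\mathbb{P}(T_m=t\mid Y^t)|_{t=T_m}]=\sum_t\mathbb{E}[-\mathbb{P}(T_m=t\mid Y^t)\log\mathbb{P}(T_m=t\mid Y^t)]\le H(T_m)$, which is $O(\log m)$ by Lemma~\ref{lemma:ent_obs_len}.

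You instead make the whole information density uniformly integrable and apply Vitali. That works, but your claim that ``the correction term is controlled by the same bounds'' should be made explicit. Let $F=\max_s|f(s)|$.
\begin{itemize}
\item Lower bound $p(y^t,T_m=t)$ by $\mathbb{P}(T_m=t)(2\pi\sigma^2)^{-t/2}\exp\bigl(-\sum_{i\le t}(|y_i|+F)^2/(2\sigma^2)\bigr)$.
\item Upper bound $p_t(y^t)\le(2\pi\sigma^2)^{-t/2}$.
\item Together these give $0\le-\log\mathbb{P}(T_m=t\mid Y^t)\le-\log\mathbb{P}(T_m=t)+\frac{\log e}{2\sigma^2}\sum_{i\le t}(|Y_i|+F)^2$.
\item For $\mu>1$, $-\log\mathbb{P}(T_m=t)\le m\log\mu+(t-m)\log\frac{\mu}{\mu-1}=O(t)$, so the same second-moment bound applies.
\end{itemize}

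Your route dispenses with the appendix lemma and treats all terms uniformly. The paper's route needs no pointwise control of the endpoint posterior and gives an explicit $O(\log m)$ bound on its expected contribution. Your Step 1 check that $R$ is irreducible, aperiodic and shares $Q$'s stationary distribution is a useful detail the paper leaves implicit.
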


\begin{proof}
{
Let $p_t(Y^t)$ denote the fixed-length marginal density of the stationary hidden-Markov
output process. The random-length output density satisfies
\begin{align}
p(Y^{T_m})
=
p_{T_m}(Y^{T_m})
\left.\mathbb{P}(T_m=t\mid Y^t)\right|_{t=T_m}.
\label{eq:random_length_density}
\end{align}
}
By the Shannon--McMillan--Breiman theorem \cite{CoverThomas}, together with
$T_m\to\infty$ and $T_m/m\to\mu$ almost surely, \cite[Theorem 8.1]{Gut2009}
\begin{align*}
-\frac{1}{m}\log p_{T_m}(Y^{T_m})=
\frac{T_m}{m}
\left(
-\frac{1}{T_m}\log p_{T_m}(Y^{T_m})
\right)
\to\mu\overline H(Y)
\qquad\text{a.s.}
\end{align*}

{
It remains to show that the endpoint probability is subexponential. For
$\epsilon>0$, let $A_t=\{\mathbb{P}(T_m=t\mid Y^t)<2^{-\epsilon m}\}.$
Then
\begin{align*}
\mathbb{P}(T_m=t,A_t)
&=\mathbb E[
\mathbf 1_{A_t}\mathbb{P}(T_m=t\mid Y^t)]
\le2^{-\epsilon m}.
\end{align*}
Thus, for any $c>\mu$,
\begin{align*}
&\mathbb{P}\!\left(
-\left.\log \mathbb{P}(T_m=t\mid Y^t)\right|_{t=T_m}>\epsilon m
\right)\le
\mathbb{P}(T_m>cm)+cm\,2^{-\epsilon m}.
\end{align*}
The right-hand side is summable for i.i.d. geometric durations using a Chernoff bound, so
Borel--Cantelli gives
\begin{align*}
-\frac{1}{m}
\left.\log \mathbb{P}(T_m=t\mid Y^t)\right|_{t=T_m}
\to0
\qquad\text{a.s.}
\end{align*}
Together with \eqref{eq:random_length_density}, this proves the strong AEP. 

Since \(p_t\) is the marginal output density, retaining the term
corresponding to the realised state path gives a lower bound on \(p_t(Y^t)\).
Together with the Gaussian upper bound $p_t(Y^t)\leq (2\pi\sigma^2)^{-t/2},$ this yields some \(C<\infty\) such that
\[
\left|-\log p_t(Y^t)\right|
\leq
C\left(1+t+\sum_{i=1}^{t}N_i^2\right).
\]
Moreover, since
\(\mathbb E[T_m^2]=
m^2\mu^2+m\mu(\mu-1)\),
\[
\mathbb E\!\left[
\left(\sum_{i=1}^{T_m}N_i^2\right)^2
\right]
=
\sigma^4\left(\mathbb E[T_m^2]+2\mathbb E[T_m]\right)
=
O(m^2).
\]
Consequently,
\[
\sup_{m\geq1}
\mathbb E\!\left[
\left|
-\frac{1}{m}\log p_{T_m}(Y^{T_m})
\right|^2
\right]
<\infty.
\]
Hence, the almost sure convergence also holds in \(L^1\):
\begin{align*}
\mathbb E\!\left[
-\frac{1}{m}\log p_{T_m}(Y^{T_m})
\right]
\to \mu\overline H(Y).
\end{align*}

For the endpoint term, concavity of $-x\log x$ gives
\begin{align*}
\mathbb E\!\left[
-\left.\log \mathbb{P}(T_m=t\mid Y^t)\right|_{t=T_m}
\right]&=
\sum_t\mathbb E\!\left[
-\mathbb{P}(T_m=t\mid Y^t)
\log \mathbb{P}(T_m=t\mid Y^t)
\right]\\
&\quad\le
\sum_t
-\mathbb{P}(T_m=t)\log \mathbb{P}(T_m=t)
=H(T_m).
\end{align*}
By Lemma~\ref{lemma:ent_obs_len}, $H(T_m)=o(m)$. Taking expectations in
\eqref{eq:random_length_density} therefore yields $H_T(Y)=\mu\overline H(Y)$.}
\end{proof}

For the sample-level Markov process \(Z\), the analogous identity \(H_T(Z)=\mu\overline{H}(Z)\) follows by the same random-length argument,
simplified by the Markov structure of \(Z\). The above theorem can therefore be viewed as an extension of this relation to the hidden Markov process \(Y\). In contrast, no corresponding extension is available after conditioning on the Markov source \(S^m\), as is required in the definition of mutual information rates. Part of the motivation for the results developed in this paper is to identify the closest analogous formulations in that setting.

The mutual information rate of the noisy geometric duplication channel is defined, whenever the limit exists, by
\begin{align}
     I_T(S;Y) = \lim_{m\to\infty} \frac{1}{m} I(S^m;Y^{T_m}).
\end{align}
Thus, \(I_T(S;Y)\) represents the asymptotic mutual information per input symbol between the input process and the random-length output process. For ergodic Markov sources, this limit exists and yields an achievable information rate~\cite{McBain2024b}. This leads to the following coding theorem under Markov input constraints.

\begin{theorem}[Markov-constrained capacity {\cite[Theorem 4]{McBain2024b}}]\label{theorem:capacity}
The capacity of the noisy geometric duplication channel constrained to ergodic Markov sources exists and is given by
\begin{align}
    C_{\rm Markov}
    = \sup_{\text{ergodic }P_{S|S^-}} I_T(S;Y).
\end{align}
\end{theorem}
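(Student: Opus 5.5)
Our plan is to prove the coding theorem in two directions: achievability via a strong AEP for the information density, and a converse via Fano's inequality. The strong AEP will come from the decomposition developed later rather than from the original argument of \cite{McBain2024b}.

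\emph{Achievability.} Fix an ergodic $P_{S\mid S^-}$ and write the information density as
\[
\frac{1}{m}\log\frac{W^{(m)}(Y^{T_m}\mid S^m)}{p(Y^{T_m})}
= -\frac{1}{m}\log p(Y^{T_m}) + \frac{1}{m}\log W^{(m)}(Y^{T_m}\mid S^m).
\]
By Theorem~\ref{theorem:AEP-HY}, the first term converges almost surely to $\mu\overline H(Y)=H_T(Y)$. For the second term, we use the correspondence stated after Definition~\ref{def:nnc}:
\[
-\log W^{(m)}(Y^{T_m}\mid S^m) = -\log\alpha_m + \frac{\log e}{2\sigma^2}\,\Psi_{2\sigma^2}\big(f(S^m),Y^{T_m}\big).
\]
The term $-\frac1m\log\alpha_m$ converges almost surely because $T_m/m\to\mu$. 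The normalised Soft-DTW term is where Lemma~\ref{lemma:subadditivity} enters. We will build a stationary ergodic doubly indexed family by splitting input blocks $[m_1+1,m_2]$ together with their random output blocks $(T_{m_1},T_{m_2}]$. This works because the pairs $(S_\ell,K_\ell,\text{noise block}_\ell)$ form an ergodic Markov chain on an enlarged space, and the block functional depends only on that chain. Shift-subadditivity then gives $X_{0,m}\le X_{0,l}+X_{l,m}$. We also need an integrability bound $\mathbb E|X_{0,1}|<\infty$, which follows from the Gaussian moment bounds in the proof of Theorem~\ref{theorem:AEP-HY}, and a linear lower bound $X_{0,m}\ge -cm$ in expectation, which follows from Lemma~\ref{lemma:soft_min_inequality} since $\ln|\Pi_{m,t}|\le t$. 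Kingman's theorem \cite{Kingman1968} then yields almost sure convergence to a constant, and this constant must equal $H_T(Y\mid S)$. Combining the two terms, the information density converges almost surely to $I_T(S;Y)$. A Feinstein-type argument, or random coding with threshold decoding on the density, then gives achievability of every $R<I_T(S;Y)$. Taking the supremum over ergodic sources gives the lower bound on $C_{\rm Markov}$.

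\emph{Converse.} For any code whose codewords are drawn from the Markov-constrained support, Fano's inequality gives $mR\le I(S^m;Y^{T_m})+1+P_e mR$. It remains to bound $\frac1m I(S^m;Y^{T_m})$ asymptotically by the supremum over ergodic Markov laws. We would use the standard argument: $I(S^m;Y^{T_m})$ is maximised, up to $o(m)$, by Markov laws of order $\tau$. This uses the chain rule together with the fact that the channel acts memorylessly on state transitions apart from the geometric durations, whose boundary effects cost only $H(T_m)=o(m)$ by Lemma~\ref{lemma:ent_obs_len}. Stationarising and restricting to ergodic components, using concavity and the continuity of $I_T$ in $Q$, then gives the converse.

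The main obstacle is the Kingman step. The output block boundaries $T_\ell$ are random and depend on the state process, so stationarity of the doubly indexed family $X_{l,m}$ must be justified carefully. The honest route is to index by input symbols and treat $(S_\ell,K_\ell,N\text{-block}_\ell)$ as the driving ergodic process under the shift. The second difficulty is identifying the Kingman limit with $H_T(Y\mid S)$. For this we would use the $L^1$ convergence that Kingman provides, together with $\mathbb E[-\log W^{(m)}]=h(Y^{T_m}\mid S^m)$ and the existence of the rate from \cite{McBain2024b}.
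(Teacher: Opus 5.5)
\paragraph{Achievability} Your achievability half is essentially the paper's alternative proof. The information density splits pointwise into two pieces:
\begin{itemize}
\item $-\frac{1}{m}\log p(Y^{T_m})$, which is controlled by Theorem~\ref{theorem:AEP-HY};
\item $\frac{1}{m}\log W^{(m)}(Y^{T_m}\mid S^m)$, which reduces to $\frac{1}{m}\log\alpha_m$ plus the Soft-DTW term. That term is handled by Lemma~\ref{lemma:subadditivity} and Kingman on the input-indexed marked process, exactly as in Theorem~\ref{theorem:strongAEP_Psi}.
\end{itemize}
Identifying the Kingman limit with $H_T(Y\mid S)$ through the $L^1$ convergence is fine.

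\paragraph{The gap: your converse} Fano applied to a rate-$R$ code bounds $R$ by $\frac{1}{m}I(S^m;Y^{T_m})$ under the uniform law on the codebook. That is an arbitrary law on $\mathcal{S}^{(m)}$, not a Markov one. To finish, you therefore need $\frac{1}{m}\sup_{P_{S^m}}I(S^m;Y^{T_m})$ to be attained, up to $o(1)$, by Markov laws of the fixed order built into $\Omega$. That would make $C_{\rm Markov}$ equal to the unconstrained capacity, and it is false in general.

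Already in the degenerate case $\mu=1$ the model is a finite-state ISI channel, equivalently a memoryless channel with a de~Bruijn input constraint. For such channels, capacity is in general approached only by Markov sources of increasing order, not by a fixed order. The paper likewise remarks that its rates may be tightened by sources with longer memory, which enlarges the state space.

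The supporting claims also fail:
\begin{itemize}
\item The duplication effect is not a boundary effect of size $H(T_m)=o(m)$. The synchronisation term $I(K^m;Y^{T_m}\mid S^m)$ grows linearly in $m$, with rate $\mu R_{\rm seg}$ by Theorem~\ref{theorem:decomp}.
\item $I_T$ need not be concave in $Q$.
\end{itemize}

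\paragraph{What to do instead} The statement is a constrained formula. $C_{\rm Markov}$ is the supremum of the rates achievable with ergodic Markov inputs, i.e.\ the inf-information rate of the Verd\'u--Han general formula maximised over this input class. Its whole content is information stability for each fixed ergodic Markov source: the normalised information density must converge to $I_T(S;Y)$, so that the inf-information rate equals $I_T(S;Y)$.

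This is exactly what the paper establishes, almost surely, by combining Theorem~\ref{theorem:AEP-HY}, Theorem~\ref{theorem:strongAEP_Psi} and the decomposition. Your achievability paragraph already provides it. You should drop the Fano paragraph rather than try to repair it: a converse over arbitrary codebooks is neither true nor needed here.
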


Thus, \(C_{\rm Markov}\) is the supremum of achievable information rates over ergodic Markov inputs. The proof in~\cite{McBain2024b} establishes information stability using only Markov's inequality and the Shannon-McMillan-Breiman theorem, and was later presented in a slightly refined form in~\cite{McBain2025a}. In this paper, we provide an alternative proof that in part relies on the strong AEP for the output entropy rate in Theorem~\ref{theorem:AEP-HY}, together with the strong AEP-type result for soft alignment rates and the information rate decomposition developed in the subsequent sections. This upgrades the weak AEP results in~\cite{McBain2024b}, which were established with convergence in probability, to strong AEP results with almost sure convergence. 

\subsection{Soft Alignment Rates}

We now introduce soft alignment rates, which quantify the asymptotic
contribution, per input symbol, of alignment uncertainty between the input
sequence and the variable-length output. These rates are obtained by normalising the expected Soft-DTW alignment functional by the number of input symbols. They play a central role in the subsequent information rate analysis through their
relationship with $H_T(K|Y,S)$.

Moreover, $H_T(K|Y,S)$ admits a physical interpretation: it asymptotically
quantifies the logarithmic size of the set of high-probability segmentations compatible with a {\em typical} state sequence and channel output. In the
following sections, we show that the soft alignment rate appears naturally in an information rate decomposition, where it captures the penalty associated with synchronisation uncertainty and separates this effect from the ordinary
output entropy rate. It is interesting to note that, if there were only a single admissible segmentation, e.g., under the restriction $K_{\ell}=K$ for all $\ell$, the penalty would be zero.

{
The factor $m\mu\sigma^2$ in the following definition normalises the alignment cost by the expected output length and the noise variance, yielding a dimensionless quantity.
}

\begin{definition}[Soft alignment rate]
The soft alignment rate is defined as
    \begin{align}
        \overline{\psi} = \lim_{m\rightarrow\infty} \frac{1}{ m\mu \sigma^2} \mathbb{E}[\Psi_{2\sigma^2}(X^m,Y^{T_m})]~.
    \end{align}
\end{definition}

In the following lemma, we show that the soft alignment rate has a natural information-theoretic interpretation, since it is directly related to the conditional entropy rate \(H_T(K|Y,S)\).

\begin{lemma}[Soft alignment entropy relation]\label{lemma:sdtwraterelation}
The soft alignment rate $\overline{\psi}$ relates to entropy rate $H_T(K|Y,S)$ as 
\begin{align}
    \overline{\psi} = 1 - \frac{\ln(4)H_T(K|Y,S)}{\mu}~.
\end{align}
\end{lemma}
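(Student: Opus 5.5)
The plan is to compute the posterior of the segmentation $K^m$ given $(S^m,Y^{T_m})$ in closed form, and to recognise its normaliser as the Soft-DTW sum. The identity then follows by taking expectations and normalising by $m$.

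\textbf{Step 1: posterior of the segmentation.} Fix $\mathbf{s}$ and an output $\mathbf{y}$ of length $t_m$. By the construction in Section~\ref{sec:channel_model}, the joint law of $(K^m,Y^{T_m})$ given $S^m=\mathbf{s}$ at $(\mathbf{k},\mathbf{y})$ with $\mathbf{k}\in\mathcal{K}_{m,t_m}$ is
\[
\alpha_m\, e^{-\frac{1}{2\sigma^2}\|\mathbf{y}-f(\mathbf{s}^{\mathbf{k}})\|^2}.
\]
The key observation is that $\alpha_m$ depends on $\mathbf{k}$ only through $t_m$ and $m$, because geometric durations give $\mathbb{P}(K^m=\mathbf{k})=(1-1/\mu)^{t_m-m}(1/\mu)^m$. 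Dividing by \eqref{eq:W}, the factor $\alpha_m$ cancels and
\[
P(\mathbf{k}\mid \mathbf{y},\mathbf{s})=\frac{e^{-C(\pi_{\mathbf{k}})/\gamma}}{\sum_{\pi\in\Pi_{m,t_m}}e^{-C(\pi)/\gamma}},\qquad \gamma=2\sigma^2 .
\]
Here $\pi_{\mathbf{k}}$ is the alignment path associated with $\mathbf{k}$. I use the bijection between $\mathcal{K}_{m,t_m}$ and $\Pi_{m,t_m}$ noted in Section~\ref{sec:soft_alignment}. Under this bijection the squared Euclidean path cost is $C(\pi_{\mathbf{k}})=\|\mathbf{y}-\mathbf{x}^{\mathbf{k}}\|^2$, where $\mathbf{x}=f(\mathbf{s})$ and $\mathbf{x}^{\mathbf{k}}=f(\mathbf{s}^{\mathbf{k}})$. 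By the definition of $\Psi_\gamma$, the denominator equals $e^{-\Psi_{\gamma}(\mathbf{x},\mathbf{y})/\gamma}$.

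\textbf{Step 2: pointwise log-posterior.} Evaluating at the true segmentation $K^m$, the cost is $C(\pi_{K^m})=\|Y^{T_m}-f(Z^{T_m})\|^2=\|N^{T_m}\|^2$. Hence
\[
-\ln P(K^m\mid Y^{T_m},S^m)=\frac{\|N^{T_m}\|^2-\Psi_{2\sigma^2}(X^m,Y^{T_m})}{2\sigma^2}.
\]

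\textbf{Step 3: expectation and normalisation.} Taking expectations gives $H(K^m\mid Y^{T_m},S^m)$ in nats on the left-hand side. The noise $N$ is independent of $T_m$, so by Wald's identity (or simply conditioning on $T_m$) we have $\mathbb{E}\|N^{T_m}\|^2=\sigma^2\mathbb{E}[T_m]=m\mu\sigma^2$. Dividing by $m$ then gives
\[
\frac{\ln 2}{m}H(K^m\mid Y^{T_m},S^m)=\frac{\mu}{2}-\frac{\mu}{2}\cdot\frac{\mathbb{E}[\Psi_{2\sigma^2}(X^m,Y^{T_m})]}{m\mu\sigma^2}.
\]

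\textbf{Step 4: passage to the limit.} The $T$-indexed rate $H_T(K\mid Y,S)$ exists for Markov sources by \cite{McBain2024b}. The display above therefore shows that the limit defining $\overline{\psi}$ exists, and that
\[
\ln 2\,H_T(K\mid Y,S)=\frac{\mu}{2}\,(1-\overline{\psi}).
\]
Rearranging, and using $2\ln 2=\ln 4$, gives the stated relation.

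\textbf{Main obstacle.} I expect the main care point to be justifying Step 1 rigorously: $K$ is discrete while $Y$ is continuous, and the output length is random. This is handled by working with the mixed joint density conditional on $T_m=t_m$. Note that $T_m$ is a function of $K^m$, so conditioning on it does not change the posterior of $K^m$. The remaining steps are routine, and only need finiteness of $\mathbb{E}|\Psi|$, which follows from Lemma~\ref{lemma:soft_min_inequality} together with $\ln|\Pi_{m,t_m}|\le t_m$ and $\mathbb{E}[T_m]<\infty$.
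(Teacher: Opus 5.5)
Your proposal is correct and follows the paper's proof. The paper likewise writes $H(K^m\mid Y^{T_m},S^m)$ as the expected negative log of the posterior whose normaliser is the Soft-DTW sum at $\gamma=2\sigma^2$, uses $\mathbb{E}\|Y^{T_m}-f(Z^{T_m})\|^2=m\mu\sigma^2$, and passes to the limit. You also fill in details the paper leaves implicit: that $\alpha_m$ cancels because geometric durations make $\mathbb{P}(K^m=\mathbf{k})$ depend only on $(m,t_m)$, that the limit defining $\overline{\psi}$ exists because $H_T(K\mid Y,S)$ does, and that $\Psi_{2\sigma^2}$ is integrable.
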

\begin{proof}
Denote by $\mathbf{S}^{\mathbf{k}}$ the duplicated state sequence $S^m$
according to segmentation $\mathbf{k}$. Observe that
$\mathbb{E}\!\left[\|Y^{T_m}-f(Z^{T_m})\|^2\right]
=\sigma^2\mathbb{E}[T_m]=\sigma^2m\mu$. Then
\begin{align*}
\frac{1}{m} H(K^m|Y^{T_m},S^m)&= -\frac{1}{m}\mathbb{E}\left[\log
\frac{e^{-\frac{1}{2\sigma^2}\|Y^{T_m}-f(Z^{T_m})\|^2}}
{\sum_{\mathbf{k}\in\mathcal{K}_{m,T_m}}
e^{-\frac{1}{2\sigma^2}\|Y^{T_m}-f(\mathbf{S}^{\mathbf{k}})\|^2}}
\right] \\
&= \frac{\mu}{\ln(4)}
-\frac{1}{m\sigma^2 \ln(4)}
\mathbb{E}\!\left[\Psi_{2\sigma^2}
\big(X^m,Y^{T_m}\big)\right]
\end{align*}
{Taking $m\to\infty$ and applying the definitions of
$H_T(K|Y,S)$ and $\overline{\psi}$ yields the result.}
\end{proof}

\begin{corollary}\label{cor:sdtw_ineq}
The soft alignment rate is bounded as
\begin{align}
{1 - \ln(4) h_b(1/\mu)} \leq \overline{\psi} \leq 1~.
\end{align}
\end{corollary}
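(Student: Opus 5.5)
By Lemma~\ref{lemma:sdtwraterelation}, $\overline{\psi} = 1 - \ln(4)H_T(K|Y,S)/\mu$. Both inequalities in the corollary therefore reduce to two-sided bounds on the conditional entropy rate $H_T(K|Y,S)$:
\begin{align*}
0 \le H_T(K|Y,S) \le \mu\, h_b(1/\mu).
\end{align*}

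The upper bound $\overline{\psi}\le 1$ comes from nonnegativity. Given $(Y^{T_m},S^m)$, the segmentation $K^m$ is a discrete random variable taking values in $\mathcal{K}_{m,T_m}$. Its conditional entropy is therefore nonnegative for every $m$, and so is the limit $H_T(K|Y,S)$. Substituting this into the relation gives $\overline{\psi}\le 1$.

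For the lower bound, I drop the conditioning:
\begin{align*}
H(K^m|Y^{T_m},S^m)\le H(K^m).
\end{align*}
Since the $K_\ell$ are i.i.d.\ ${\rm Geom}(1/\mu)$, this equals $m\,H(K_1)$. The entropy of a geometric variable with success probability $p=1/\mu$ is the standard value
\begin{align*}
H(K_1) = \frac{h_b(p)}{p} = \mu\, h_b(1/\mu).
\end{align*}
To confirm it, write $-\log P(K_1=k) = -\log p - (k-1)\log(1-p)$ and take expectations using $\mathbb{E}[K_1-1]=(1-p)/p$. Dividing by $m$ and letting $m\to\infty$ gives $H_T(K|Y,S)\le \mu h_b(1/\mu)$. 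Substituting into the relation then yields $\overline{\psi}\ge 1-\ln(4)h_b(1/\mu)$.

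The only delicate point is the passage to the limit, which needs $H_T(K|Y,S)$ to exist. The paper assumes this in the proof of Lemma~\ref{lemma:sdtwraterelation} and in the statement that all $T$-indexed entropy rates exist for Markov sources. Even without it, the bounds hold uniformly in $m$, so they transfer to any limit point. A finer bound that keeps the conditioning on $T_m$ is unnecessary here, since $H(T_m)=o(m)$ would not change the rate.
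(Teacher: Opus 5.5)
Your proof is correct and follows the paper's argument: both use Lemma~\ref{lemma:sdtwraterelation} to reduce the corollary to the trivial bounds $0 \le H_T(K\mid Y,S) \le H(K) = \mu\, h_b(1/\mu)$. The paper also notes an alternative route to $\overline{\psi}\le 1$, via the soft-min inequality of Lemma~\ref{lemma:soft_min_inequality} with the hard alignment cost evaluated on the true segmentation, but your entropy argument does not need it.
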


\begin{proof}
{The bounds on soft alignment rate follow from the trivial bounds $H(K) \geq H_T(K\mid Y,S)\geq 0$.} Indeed, Lemma~\ref{lemma:sdtwraterelation} gives
\begin{align*}
H(K) \geq H_T(K\mid Y,S)
=
\frac{\mu}{\ln(4)}\left(1-\overline{\psi}\right)
\geq 0.
\end{align*}

Alternatively, the upper bound can be obtained directly from the alignment functional itself. For arbitrary inputs, the soft-min inequality in Lemma \ref{lemma:soft_min_inequality} gives
\begin{align*}
\mathrm{\Psi}_{2\sigma^2}(X^m,Y^{T_m}) \leq \mathrm{\Psi}_{0}(X^m,Y^{T_m})~.
\end{align*}
Moreover, \(\mathrm{\Psi}_{0}(X^m,Y^{T_m})\) is the minimum mean-square alignment cost over all admissible segmentations. In particular, evaluating this cost on the true segmentation \(K^m\) gives mean-square error \(\sigma^2\). Therefore
\begin{align*}
\frac{1}{m\mu}\mathbb{E}\!\left[\mathrm{\Psi}_{0}(X^m,Y^{T_m})\right]\leq \sigma^2~.
\end{align*}
By monotonicity of expectation, this proves the result.
\end{proof}

The next important result shows that soft alignment rates satisfy an analogue of the strong AEP for information rates: the normalised soft alignment functional converges almost surely to the soft alignment rate. The proof is essentially an application of Kingman's subadditive ergodic theorem \cite{Kingman1968}, which extends the classical Birkhoff ergodic theorem from additive time averages to normalised subadditive functionals, combined with properties of soft alignment functionals established in Section~\ref{sec:soft_alignment}.

\begin{theorem}[Strong AEP for the soft alignment rate]
\label{theorem:strongAEP_Psi}
For an ergodic Markov source, the normalised soft alignment functional satisfies
\begin{align}
\frac{1}{m\mu\sigma^2}\Psi_{2\sigma^2}(X^m,Y^{T_m})
\to
\overline{\psi}
\qquad \text{a.s.}
\end{align}
as \(m\to\infty\).
\end{theorem}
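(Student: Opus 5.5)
We apply Kingman's subadditive ergodic theorem to a doubly indexed family of alignment costs built from the joint input--duplication--noise process. First we set up a stationary ergodic driving sequence. Let $U_\ell=(S_\ell,K_\ell,N_{T_{\ell-1}+1}^{T_\ell})$ denote the $\ell$-th \emph{segment block}: the state, its duration, and the noise samples it generates. Take $S$ stationary ($S_0\sim q_0$). Then $(U_\ell)_{\ell\ge1}$ is a stationary process, since it is a Markov chain on $S$ with i.i.d. marks $(K_\ell,N\text{-block})$ attached conditionally independently. It is also ergodic, because an ergodic Markov chain together with i.i.d. marks is ergodic; one can see this by viewing $(S_\ell,K_\ell,\text{noise block})$ as a Markov chain on a general state space whose transition kernel factorises, or by a direct mixing argument. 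For $0\le a<b$ define
\begin{align*}
F_{a,b}=\Psi_{2\sigma^2}\big(X_{a+1}^{b},\,Y_{T_a+1}^{T_b}\big),
\end{align*}
which is a measurable function of $U_{a+1}^b$ alone, since $Y_{T_a+1}^{T_b}=f(Z_{T_a+1}^{T_b})+N_{T_a+1}^{T_b}$. Let $\theta$ be the left shift on $(U_\ell)$. Then $F_{a,b}\circ\theta=F_{a+1,b+1}$, which is the stationarity requirement of Kingman's theorem. The point of indexing by segments rather than by samples is that the output cut points $T_a$ are determined by the blocks, which removes the randomly indexed, non-stationary structure.

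Next we check the remaining Kingman hypotheses. Lemma~\ref{lemma:subadditivity}, applied with $m_1=b-a$, $m_2=c-b$, $t_1=T_b-T_a\ge m_1$ and $t_2=T_c-T_b\ge m_2$, gives $F_{a,c}\le F_{a,b}+F_{b,c}$ pathwise. For integrability we need $\mathbb{E}[F_{0,1}^+]<\infty$ and $\inf_m \mathbb{E}[F_{0,m}]/m>-\infty$. For the upper side, Lemma~\ref{lemma:soft_min_inequality} gives $F_{0,m}\le\Psi_0\le\sum_{t\le T_m}N_t^2$, whose mean is $\sigma^2 m\mu$. For the lower side, the same lemma gives $F_{0,m}\ge -2\sigma^2\ln|\Pi_{m,T_m}|$. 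Since $|\Pi_{m,T_m}|=\binom{T_m-1}{m-1}\le 2^{T_m}$, this yields $\mathbb{E}[F_{0,m}]\ge -2\sigma^2\ln 2\,m\mu$, so the linear lower bound holds.

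Kingman's theorem then gives $F_{0,m}/m\to \lim_m \mathbb{E}[F_{0,m}]/m$ almost surely, and the limit is a constant because of ergodicity. Dividing by $\mu\sigma^2$ identifies the limit as $\overline{\psi}$, and also shows that the limit in the definition of $\overline{\psi}$ exists. The result transfers to the given initial distribution because the source is ergodic and $q_0$ is stationary, as the paper assumes throughout.

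The main obstacle is the ergodicity of the block process $(U_\ell)$ together with the measurability and stationarity bookkeeping, since the number of noise samples per block is random. We handle this by encoding each block as an element of $\bigcup_k \Omega\times\{k\}\times\mathbb{R}^k$, and by noting that a Markov-modulated sequence with conditionally i.i.d. marks inherits ergodicity from the ergodic chain $(S_\ell)$ through standard skew-product arguments.
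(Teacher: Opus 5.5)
Your proposal is correct and follows the paper's own route. Both apply Kingman's subadditive ergodic theorem to the segment-indexed marked process $(S_\ell,K_\ell,N_{T_{\ell-1}+1}^{T_\ell})_{\ell\ge1}$, take subadditivity from Lemma~\ref{lemma:subadditivity}, and get integrability from the bounds $-2\sigma^2\ln|\Pi_{m,T_m}|\le F_{0,m}\le\|N^{T_m}\|^2$ with $|\Pi_{m,T_m}|\le 2^{T_m}$. Your doubly indexed family $F_{a,b}$ and your justification of ergodicity only spell out what the paper asserts without proof. That justification is even simpler than your skew-product remark suggests: the marks are i.i.d. and independent of the chain, so the process is a direct product of an ergodic system with a weakly mixing one.
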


\begin{proof}
Define
\begin{align*}
\Lambda_m =\Psi_{2\sigma^2}(X^m,Y^{T_m}), \qquad m\in\mathbb{N}.
\end{align*}
By Lemma~\ref{lemma:subadditivity}, for every \(m_1,m_2\in\mathbb{N}\),
\begin{align*}
\Lambda_{m_1+m_2}
&=
\Psi_{2\sigma^2}(X^{m_1+m_2},Y^{T_{m_1+m_2}}) \\
&\le
\Psi_{2\sigma^2}(X^{m_1},Y^{T_{m_1}})
+
\Psi_{2\sigma^2}(X_{m_1+1}^{m_1+m_2},
Y_{T_{m_1}+1}^{T_{m_1+m_2}}).
\end{align*}

{
Let $\theta$ denote the left shift of the input-indexed marked process
$\bigl(S_\ell,K_\ell,N_{T_{\ell-1}+1}^{T_\ell}\bigr)_{\ell\geq1}$,
which is stationary ergodic. By construction,
$\Lambda_{m_2} \circ \theta^{m_1}$ equals the second term above, and hence
}
\begin{align*}
{
\Lambda_{m_1+m_2}
\leq
\Lambda_{m_1}
+
\Lambda_{m_2}\circ\theta^{m_1},
}
\end{align*}
{
which is the standard stationary subadditive form. Moreover, by
Lemma~\ref{lemma:soft_min_inequality} and evaluation of the hard
alignment cost on the true segmentation,
}
\begin{align*}
{
-2\sigma^2\ln|\Pi_{m,T_m}|
\leq \Lambda_m
\leq \|N^{T_m}\|^2 .
}
\end{align*}
{
Since $|\Pi_{m,T_m}|=\binom{T_m-1}{m-1}\leq2^{T_m}$ and
$\mathbb{E}[T_m]=m\mu$, these bounds imply
$\mathbb{E}[|\Lambda_m|]<\infty$ and a linear lower bound on
$\mathbb{E}[\Lambda_m]$.
}

Therefore, by Kingman's subadditive ergodic theorem \cite{Kingman1968},
\begin{align*}
\frac{\Lambda_m}{m}\to c \quad \text{a.s.},
\qquad
\frac{1}{m}\mathbb{E}[\Lambda_m]\to c,
\end{align*}
where \(c\) is deterministic. By the definition of \(\overline{\psi}\), this proves the result.
\end{proof}

Theorem~\ref{theorem:strongAEP_Psi} also yields a strong AEP for
\(H_T(K\mid Y,S)\). Combined with the decomposition developed in the
following section, which expresses \(I_T(S;Y)\) in terms of
\(\overline{H}(Y)\) and \(\overline{\psi}\), this completes the
alternative proof of Theorem~\ref{theorem:capacity} {
by establishing almost sure convergence of each term in the
corresponding pointwise decomposition of the information density}.

We briefly note that Theorem~\ref{theorem:strongAEP_Psi} is analogous to limit results for random path-sum models, such as {\em directed polymers in random environments} from the statistical physics literature on disordered systems \cite{Zygouras2024}. This connection suggests that tools from that literature may be useful for analysing soft alignment rates beyond the geometric setting
considered here.

\section{Information Rate Decomposition}\label{sec:decomposition}

In this section, we introduce the decomposition theorem for noisy geometric duplication channels. This theorem shows that $I_T(S;Y)$ can be written in terms of an ISI channel, with states evolving according to the sample-level Markov process, and a penalty term proportional to the soft alignment rate.
\begin{theorem} \label{theorem:decomp}
The information rate has decomposition
    \begin{align}
        I_T(S;Y) = \mu (I_{\rm ISI} - R_{\rm seg}), \label{eq:I_T_S_Y}
    \end{align}
    where 
\begin{align}
    I_{\rm ISI} &= \overline{H}(Y) - \frac{1}{2}\log(2\pi e \sigma^2),\\
    R_{\rm seg} &= h_b\left(\frac{1}{\mu}\right) - \frac{1}{\ln(4)}\left(1- \overline{\psi}\right) .
\end{align}
\end{theorem}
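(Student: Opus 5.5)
The plan is to expand the per-symbol mutual information $\frac{1}{m}I(S^m;Y^{T_m})$ with the chain rule and identify each resulting term with a quantity already controlled earlier in the paper. Write
\begin{align*}
I(S^m;Y^{T_m}) = H(Y^{T_m}) - H(Y^{T_m}\mid S^m).
\end{align*}
Then expand the conditional term by introducing the segmentation $K^m$:
\begin{align*}
H(Y^{T_m}\mid S^m) = H(K^m\mid S^m) + H(Y^{T_m}\mid K^m,S^m) - H(K^m\mid Y^{T_m},S^m).
\end{align*}
This uses $H(Y,K\mid S)=H(K\mid S)+H(Y\mid K,S)=H(Y\mid S)+H(K\mid Y,S)$. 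I will treat each of the four terms separately, divide by $m$, and let $m\to\infty$.

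The four terms are handled as follows.
\begin{itemize}
\item \textbf{Output entropy.} Theorem~\ref{theorem:AEP-HY} gives $\frac1m H(Y^{T_m})\to H_T(Y)=\mu\overline{H}(Y)$.
\item \textbf{Segmentation entropy.} The durations are independent of the source, so $H(K^m\mid S^m)=H(K^m)=m\,H(K_1)$. The entropy of a ${\rm Geom}(1/\mu)$ variable is $\mu h_b(1/\mu)$, so this term contributes $\mu h_b(1/\mu)$ per symbol.
\item \textbf{Noise term.} Given $(K^m,S^m)$, the vector $f(Z^{T_m})$ is deterministic and $T_m$ is known. The output is therefore Gaussian with $T_m$ i.i.d.\ $\mathrm{Normal}(0,\sigma^2)$ components, so
\begin{align*}
H(Y^{T_m}\mid K^m,S^m)=\mathbb{E}[T_m]\,\tfrac12\log(2\pi e\sigma^2)=m\mu\,\tfrac12\log(2\pi e\sigma^2).
\end{align*}
\item \textbf{Alignment term.} Lemma~\ref{lemma:sdtwraterelation} gives $\frac1m H(K^m\mid Y^{T_m},S^m)\to H_T(K\mid Y,S)=\frac{\mu}{\ln 4}(1-\overline{\psi})$.
\end{itemize}
Combining the four limits,
\begin{align*}
I_T(S;Y)=\mu\overline{H}(Y)-\mu h_b(1/\mu)-\tfrac{\mu}{2}\log(2\pi e\sigma^2)+\tfrac{\mu}{\ln 4}(1-\overline{\psi}),
\end{align*}
which regroups exactly into $\mu(I_{\rm ISI}-R_{\rm seg})$.

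The main obstacle is justifying the chain rule for mixed discrete/continuous random-length objects. Here $Y^{T_m}$ lives on $\cup_t\mathbb{R}^t$, so its "entropy" must be defined with respect to the reference measure that is Lebesgue measure on each $\mathbb{R}^t$. I would carry this out by writing the joint density $p(y^{t},k^m\mid s^m)$ with respect to Lebesgue measure times counting measure, and checking the identity pointwise at the level of information densities:
\begin{align*}
\log p(y\mid s) = \log P(k) + \log p(y\mid k,s) - \log P(k\mid y,s).
\end{align*}
This is the Bayes identity already used implicitly in Definition~\ref{def:nnc} and in the proof of Lemma~\ref{lemma:sdtwraterelation}. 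Taking expectations then requires finiteness of each term. Finiteness follows from the bounds $|{-\log p}|\le C(1+T_m+\sum N_i^2)$ established in the proof of Theorem~\ref{theorem:AEP-HY} and from $H(K^m\mid Y,S)\le H(K^m)<\infty$.

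The existence of each limit separately is supplied by Theorem~\ref{theorem:AEP-HY}, Lemma~\ref{lemma:sdtwraterelation}, and the exact computations above, so no further subadditivity argument is needed. The same pointwise identity, combined with the almost sure statements of Theorems~\ref{theorem:AEP-HY} and~\ref{theorem:strongAEP_Psi} and the strong law for $\log P(K^m)$ and $\|N^{T_m}\|^2$, would also give the almost sure version of the decomposition.
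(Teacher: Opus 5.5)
Your proposal is correct and is essentially the paper's argument. The paper writes $I(S^m;Y^{T_m}) = I(Z^{T_m};Y^{T_m}) - I(K^m;Y^{T_m}\mid S^m)$ via the Markov chain $(S^m,K^m)\to Z^{T_m}\to Y^{T_m}$, which is your four-term expansion with $H(Y^{T_m}\mid K^m,S^m)$ replaced by the equal quantity $H(Y^{T_m}\mid Z^{T_m})$, and it evaluates the same pieces using Theorem~\ref{theorem:AEP-HY}, the geometric entropy, Wald's identity and Lemma~\ref{lemma:sdtwraterelation}. Your extra care with the mixed discrete/continuous chain rule goes beyond the paper, but note that existence of the limit $\overline{\psi}$ (equivalently $H_T(K\mid Y,S)$) comes from the Kingman argument in Theorem~\ref{theorem:strongAEP_Psi}, not from Lemma~\ref{lemma:sdtwraterelation}, which presupposes it.
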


This decomposition has a particularly appealing computational structure,
since its two terms can be evaluated separately using standard dynamic
programming. The ISI term \(I_{\rm ISI}\) can be evaluated using the
standard forward algorithm \cite{Pfister2001,Arnold2001}.
{For the sparse de~Bruijn structure, this requires
\(O(T_m|\Omega|)\) time and \(O(|\Omega|)\) memory.}
The segmentation penalty \(R_{\rm seg}\) can be evaluated using Soft-DTW on
an \(m\times T_m\) grid, requiring
{\(O(mT_m)\) time and \(O(m)\) memory}.
Since \(\mathbb{E}[T_m]=m\mu\), its expected time complexity is
{\(O(\mu m^2)\)}, motivating the local jump-reliability bound
in Section~\ref{sec:bounds}.

We now present the direct proof of Theorem~\ref{theorem:decomp}. The key idea behind this proof is applying the chain rule to an auxiliary channel $(S^m,K^m)\mapsto Y^{T_m}$ with respect to the inputs, allowing the separation of the state information from the segmentation information. Subsequently, we show how this decomposition leads to simple yet effective lower bounds. 

\subsection{Direct Proof}
Since $Z^{T_m}$ is a deterministic function of $(S^m,K^m)$, and the channel
output depends on $(S^m,K^m)$ only through $Z^{T_m}$, the channel is represented
by the Markov chain $(S^m,K^m)\rightarrow Z^{T_m}\rightarrow Y^{T_m}$. Hence, by conditional independence,
\[
    I(S^m,K^m;Y^{T_m}\mid Z^{T_m})=0.
\]
Moreover, since $Z^{T_m}$ is a deterministic function of $(S^m,K^m)$,
\[
    I(S^m,K^m;Y^{T_m})
    =
    I(S^m,K^m,Z^{T_m};Y^{T_m}).
\]
Therefore, by the chain rule with respect to $Z^{T_m}$,
\begin{align*}
    I(S^m,K^m;Y^{T_m})
    &= I(Z^{T_m};Y^{T_m}) + I(S^m,K^m;Y^{T_m}\mid Z^{T_m}) \\
    &= I(Z^{T_m};Y^{T_m}).
\end{align*}

Now, applying the chain rule with respect to $(S^m,K^m)$ gives
\begin{align*}
    I(S^m,K^m;Y^{T_m})
    &= I(S^m;Y^{T_m}) + I(K^m;Y^{T_m}\mid S^m).
\end{align*}
Rearranging and using the identity above, we obtain
\begin{align*}
    I(S^m;Y^{T_m})
    &= I(S^m,K^m;Y^{T_m}) - I(K^m;Y^{T_m}\mid S^m) \\
    &= I(Z^{T_m};Y^{T_m}) - I(K^m;Y^{T_m}\mid S^m).
\end{align*}

Observe that $I(Z^{T_m};Y^{T_m})=H(Y^{T_m}) - H(Y^{T_m}|Z^{T_m})$. In Theorem \ref{theorem:AEP-HY}, it was shown that $H_T(Y) = \mu \overline{H}(Y)$.
{In addition, conditioned on \(Z^{T_m}\), the output consists of \(T_m\)
independent Gaussian observations. Hence, by Wald's identity
\cite[Theorem~5.3]{Gut2009},}
\begin{align*}
    H_T(Y|Z)&=\lim_{m\rightarrow\infty}\frac{1}{m}H(Y^{T_m}|Z^{T_m})\notag\\
    &= \left(\lim_{m\rightarrow\infty}\frac{\mathbb{E}[T_m]}{m}\right) H(Y|Z)= \frac{\mu}{2}\log(2\pi e \sigma^2).
\end{align*}
Hence, the first term in the decomposition gives
\begin{align*}
    I_T(Z;Y) = \mu I_{\rm ISI},
\end{align*}
and the second term in the decomposition gives
\begin{align}\label{eq:I_KY_S}
    I_T(K;Y|S) &= H(K) - H_T(K|Y,S).
\end{align}
The first term in (\ref{eq:I_KY_S}) is the entropy of a geometric random variable, given by
\begin{align*}
    H(K) = \mu h_b\left(\frac{1}{\mu}\right),
\end{align*}
and the second term in (\ref{eq:I_KY_S}) can be represented in terms of $\overline{\psi}$ using Lemma \ref{lemma:sdtwraterelation} to give the final rate decomposition.

\section{Information Rate Bounds via Decomposition}\label{sec:bounds}

\subsection{Elementary Bounds}

The rate decomposition immediately yields simple bounds on $I_T(S;Y)$ by using only coarse bounds on the residual segmentation term. In particular, from Corollary~\ref{cor:sdtw_ineq}, or equivalently from the entropy bounds
\begin{align*}
    0 \le I_T(K;Y|S) \le H(K),
\end{align*}
we obtain
\begin{align}\label{eq:elementary_lb}
    \mu I_{\rm ISI}-H(K) \le I_T(S;Y) \le \mu I_{\rm ISI}.
\end{align}
These bounds are useful as a first benchmark, since they isolate the contribution of the associated ISI channel and require no detailed analysis of the segmentation term. In particular, removing the segmentation penalty altogether yields the simple upper bound $\mu I_{\rm ISI}$, whereas replacing the residual uncertainty by its maximum possible value $H(K)$ yields the corresponding lower bound.

However, the lower bound in \eqref{eq:elementary_lb} can be quite loose. The reason is that even in the absence of additive noise, the channel output need not reveal every level transition. This occurs whenever two consecutive states can produce the same observed level, i.e., when $f(S_\ell)=f(S_{\ell+1})$ for some admissible transition, such as a self-loop in the state-space or a transition between distinct states with the same level mapping under $f$. In such cases, the noiseless output still contains an intrinsic ambiguity, and hence the source entropy cannot in general be achieved even when $\sigma=0$.

This observation suggests using the segmentation term evaluated in the noiseless setting to sharpen the elementary lower bound uniformly over all noise levels. This motivates the following result. 

\begin{theorem}[Uniform zero-noise bound]
\label{thm:uniform_zero_noise_lower_bound}
Denote by $R_{\rm seg}(0)$ the segmentation rate $R_{\rm seg}$ term at $\sigma=0$. The information rate is bounded below as
\begin{align}
    I_T(S;Y) \ge \mu(I_{\rm ISI}-R_{\rm seg}(0)),
\end{align}
and
\begin{align}
    R_{\rm seg}(0)
    = H(Z_2|Z_1)-\frac{1}{\mu}I(S;Z).
    \label{eq:Rseg0}
\end{align}
\end{theorem}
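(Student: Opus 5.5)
The plan is to express the segmentation penalty as a conditional mutual information rate and then apply data processing. By Lemma~\ref{lemma:sdtwraterelation}, $\frac{1}{\ln(4)}(1-\overline{\psi}) = H_T(K\mid Y,S)/\mu$. Together with $H(K)=\mu h_b(1/\mu)$, this gives
\begin{align*}
R_{\rm seg} = \frac{1}{\mu}\bigl(H(K)-H_T(K\mid Y,S)\bigr) = \frac{1}{\mu} I_T(K;Y\mid S),
\end{align*}
which is the identity \eqref{eq:I_KY_S} already used in the proof of Theorem~\ref{theorem:decomp}.

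At $\sigma=0$ the Soft-DTW temperature degenerates. I will therefore \emph{define} $R_{\rm seg}(0)$ as $\frac{1}{\mu}I_T(K;f(Z)\mid S)$, the same mutual-information expression with the noiseless output. Because $f$ is injective, $f(Z^{T_m})$ and $Z^{T_m}$ generate the same $\sigma$-field, so $R_{\rm seg}(0)=\frac{1}{\mu}I_T(K;Z\mid S)$. If needed, I would add a remark that this agrees with the $\sigma\to0^+$ limit, but this is not required for the bound.

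For the inequality, condition on $S^m$. The chain $K^m \to Z^{T_m} \to Y^{T_m}$ is Markov: $Z^{T_m}$ is a function of $(S^m,K^m)$, and $Y^{T_m}$ is $f(Z^{T_m})$ plus noise that is independent of everything else. The conditional data-processing inequality then gives
\begin{align*}
I(K^m;Y^{T_m}\mid S^m)\le I(K^m;Z^{T_m}\mid S^m).
\end{align*}
Dividing by $m$ and letting $m\to\infty$ (both limits exist by the subadditivity results of \cite{McBain2024b}) yields $R_{\rm seg}\le R_{\rm seg}(0)$. Substituting into Theorem~\ref{theorem:decomp} gives $I_T(S;Y)=\mu(I_{\rm ISI}-R_{\rm seg})\ge \mu(I_{\rm ISI}-R_{\rm seg}(0))$.

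For the identity \eqref{eq:Rseg0}, I will compute $I(K^m;Z^{T_m}\mid S^m)=H(Z^{T_m}\mid S^m)-H(Z^{T_m}\mid S^m,K^m)$. The second term is zero because $Z^{T_m}$ is determined by $(S^m,K^m)$. Next, $H(Z^{T_m}\mid S^m)=H(Z^{T_m})-I(S^m;Z^{T_m})$. By the random-length argument noted after Theorem~\ref{theorem:AEP-HY}, $H_T(Z)=\mu\overline{H}(Z)$. Since $Z$ is a stationary Markov chain with transition matrix $R$ (stationary because $q_0$ is stationary for $Q$, and hence also for $R$), $\overline{H}(Z)=H(Z_2\mid Z_1)$. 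Normalising by $m$ and dividing by $\mu$ gives $R_{\rm seg}(0)=H(Z_2\mid Z_1)-\frac{1}{\mu}I_T(S;Z)$, which is \eqref{eq:Rseg0} with $I(S;Z)$ read as the $T$-indexed rate.

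The main obstacle is the rigour at $\sigma=0$. Lemma~\ref{lemma:sdtwraterelation} and $\overline{\psi}$ are stated for $\sigma>0$, so $R_{\rm seg}(0)$ must be defined in mutual-information form rather than through $\Psi_0$. The existence of the limit $I_T(S;Z)$ also needs justification. I would obtain it via subadditivity exactly as for the other $T$-indexed rates, using that $H_T(Z)$ and $H_T(Z\mid S)$ both exist.
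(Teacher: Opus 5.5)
Your proposal is correct and follows the paper's route: the identity comes from applying the chain-rule decomposition in the noiseless model, where $I(K^m;Z^{T_m}\mid S^m)=H(Z^{T_m})-I(S^m;Z^{T_m})$ and $H_T(Z)=\mu H(Z_2\mid Z_1)$. The paper does the same thing by applying Theorem~\ref{theorem:decomp} at $\sigma=0$ with $I_{\rm ISI}=H(Z_2\mid Z_1)$, and it implicitly takes $R_{\rm seg}(0)=\frac{1}{\mu}I_T(K;Z\mid S)$ as you define it. Your conditional data-processing step $I(K^m;Y^{T_m}\mid S^m)\le I(K^m;Z^{T_m}\mid S^m)$ supplies the justification for $R_{\rm seg}\le R_{\rm seg}(0)$, which the paper only asserts (``gives a valid lower bound uniformly over the noise level'').
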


\begin{proof}
For the lower bound on $I_T(S;Y)$ in (\ref{eq:I_T_S_Y}), replace the segmentation term in the rate decomposition by its noiseless value $R_{\rm seg}(0)$, which gives a valid lower bound uniformly over the noise level.

Regarding $R_{\rm seg}(0)$, observe that when $\sigma=0$ we have $Y=f(Z)$, and therefore $I(S;Y)=I(S;Z)$. In addition, in the noiseless setting, the associated ISI channel information rate reduces to $I_{\rm ISI}=H(Z_2|Z_1)$. Applying the rate decomposition in the noiseless case yields
\begin{align*}
    I(S;Z)=\mu\bigl(H(Z_2|Z_1)-R_{\rm seg}(0)\bigr),
\end{align*}
and rearranging gives $R_{\rm seg}(0)$ in (\ref{eq:Rseg0}).
\end{proof}

Theorem~\ref{thm:uniform_zero_noise_lower_bound} is useful because it replaces the crude entropy penalty $H(K)$ in \eqref{eq:elementary_lb} by the sharper quantity $R_{\rm seg}(0)$, which captures only the ambiguity that remains even in the absence of observation noise. Hence, it accounts for the structural non-identifiability induced by duplications of repeated identical output levels, rather than coarsely penalising by the duration entropy $H(K)$. In this sense, it gives a substantially more informative baseline lower bound than \eqref{eq:elementary_lb}. Tighter bounds for the case of $f(S_{\ell}) \neq f(S_{\ell+1})$ will be developed in the next section.

Another advantage of this bound is that $R_{\rm seg}(0)$ is computable. Indeed, the theorem shows that it suffices to compute $I(S;Z)$ in the noiseless model. This can be done using techniques recently developed in \cite{Rameshwar2025}. By contrast, extending this refinement directly to the noisy case would require computation of entropy rates for hidden Markov processes, which is generally intractable and is a classical obstacle in the computation of information rates for channels with memory \cite{Pfister2001,Arnold2001}. Thus, although the bound is based on the zero-noise model, it remains practically useful because it is both computable and uniform over all noise levels.

\begin{remark}For the special case of a binary input alphabet with distinguishable levels $\pm1$, the noiseless information rate $I(S;Z)$ reduces to the information rate of the binary sticky channel with geometric duplications \cite{Mitzenmacher2007}, which we denote by $I_{\rm sticky}$. This quantity was directly used as a bound on $I_T(S;Y)$ in \cite{McBain2024b}. For i.i.d.\ equiprobable binary inputs, Theorem~\ref{thm:uniform_zero_noise_lower_bound} gives
\begin{align*}
    R_{\rm seg}(0)
    = h_b\!\left(\frac{1}{2\mu}\right)-\frac{1}{\mu}I_{\rm sticky}.
\end{align*}
\end{remark}

\subsection{Jump-reliability Bound}

{
For comparison, the general lower bound in \cite{Rameshwar2025} incurs the
full duration-entropy penalty $H(K)$ and an additional noise penalty. Since
its noise-dependent term lower bounds $I_T(S;Y|K)\le I_T(S,K;Y)=\mu
I_{\rm ISI}$, it is no tighter than \eqref{eq:elementary_lb}. For typical ONT
parameters, $H(K)$ often exceeds the maximum source entropy of $2$ bits/base,
making this bound vacuous. The zero-noise bound improves on
\eqref{eq:elementary_lb} by replacing $H(K)$ with $\mu R_{\rm seg}(0)$. We
next develop a tighter bound that accounts for the reliability of individual
noisy level jumps.
}

The following bound is expressed in terms of the \emph{jump distance}, a quantity introduced in \cite{McBain2022,McBain2024a} as a basic measure of synchronisability. For two adjacent states $s',s\in\Omega$ with $P_{S|S^{-}}(s|s')>0$, {recall that}
\begin{align}
    J(s',s) = \bigl| f(s) - f(s') \bigr|.
\end{align}
Intuitively, larger jump distances make neighbouring segments easier to distinguish in noise, and hence reduce the uncertainty in the associated jump time. This intuition is formalised in the following theorem.

\begin{theorem}[Jump-reliability lower bound]
\label{thm:jump_reliability_lower_bound}
Let 
\begin{align}
    \overline{R}_{\rm seg}
    =
    \frac{1}{\mu}
    I(K_1;{T_2},Y^{T_2}\mid S_1,S_2).
\end{align}
Then
\begin{align}\label{eq:LB_eq}
    I_T(S;Y)\ge \mu\bigl(I_{\rm ISI}-\overline{R}_{\rm seg}\bigr).
\end{align}
Moreover,
{\begin{align}
    \overline{R}_{\rm seg}
    &=h_b\left(\frac{1}{\mu}\right)- \frac{1}{\ln 2}
    \biggl( 1-\sum_{s',s\in\Omega} q_0(s')P_{S|S^-}(s\mid s')
    \overline{\psi}(s',s)
    \biggr),
\end{align}}
with
{\begin{align}
    \overline{\psi}(s',s)
    &=
    \frac{1}{2\mu\sigma^2}
    \mathbb{E}\!\left[
        \Psi_{2\sigma^2}\bigl((f(s'),f(s)),Y^{T_2}\bigr)
        \,\middle|\,
        S_1=s',S_2=s
    \right].
\end{align}}
\end{theorem}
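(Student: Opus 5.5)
From the direct proof of Theorem~\ref{theorem:decomp} we have $I_T(S;Y)=\mu I_{\rm ISI}-I_T(K;Y\mid S)$. Inequality \eqref{eq:LB_eq} is therefore equivalent to the upper bound
\[
I_T(K;Y\mid S)\le I(K_1;T_2,Y^{T_2}\mid S_1,S_2)=\mu\overline{R}_{\rm seg}.
\]
I would prove this by writing $I(K^m;Y^{T_m}\mid S^m)=H(K^m)-H(K^m\mid Y^{T_m},S^m)$ and bounding the conditional entropy from below, one segment at a time.

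Expand with the chain rule, $H(K^m\mid Y^{T_m},S^m)=\sum_\ell H(K_\ell\mid K^{\ell-1},Y^{T_m},S^m)$. Given $K^{\ell-1}$, the start index $T_{\ell-1}$ of segment $\ell$ is known. Conditioning reduces entropy, so we may add the remaining durations to the conditioning. Then $K_\ell$ is determined by $T_{\ell+1}-T_{\ell-1}-K_{\ell+1}$, so it is better to condition on $T_{\ell+1}$ but not on $K_{\ell+1}$. Concretely, for the term $\ell$ I keep $(K^{\ell-1},T_{\ell+1},S_\ell,S_{\ell+1},Y_{T_{\ell-1}+1}^{T_{\ell+1}})$ together with everything outside the window $(T_{\ell-1},T_{\ell+1}]$, namely $(S^m,K^{\ell-1},K_{\ell+2}^m,Y^{T_{\ell-1}},Y_{T_{\ell+1}+1}^{T_m})$.

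The durations are i.i.d., the noise is white, and $S$ is Markov. Given the full state sequence and the window endpoints, the pair $(K_\ell,\text{window samples})$ is therefore conditionally independent of everything outside the window, with the same law as in the two-segment experiment $(S_1,S_2,K_1,K_2,Y^{T_2})$. I need to check this carefully: the remaining conditioning variables fix $T_{\ell-1}$, $T_{\ell+1}$ and the outside outputs, and these only carry information about $K_\ell$ through the window. By stationarity of $S$ under $q_0$, each term is then at least $H(K_1\mid T_2,Y^{T_2},S_1,S_2)$, and the last segment contributes only $O(1)$.

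Since $H(K^m)=mH(K_1)$ and $K_1$ is independent of $(S_1,S_2)$, dividing by $m$ gives
\[
I_T(K;Y\mid S)\le H(K_1)-H(K_1\mid T_2,Y^{T_2},S_1,S_2)=I(K_1;T_2,Y^{T_2}\mid S_1,S_2).
\]
This proves \eqref{eq:LB_eq}.

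For the closed form I would repeat the argument of Lemma~\ref{lemma:sdtwraterelation} with $m=2$, conditioned on $(S_1,S_2)=(s',s)$. Given $T_2=t$, the posterior of $K_1$ on $\{1,\dots,t-1\}$ is proportional to $e^{-\|y-f((s',s)^{k})\|^2/2\sigma^2}$, because the geometric prior $(1-1/\mu)^{t-2}\mu^{-2}$ does not depend on $k_1$. The negative log of the true-segmentation term has expectation $\mathbb{E}[T_2]/(2\ln 2)=\mu/\ln 2$, so
\[
H(K_1\mid T_2,Y^{T_2},S_1=s',S_2=s)=\frac{\mu}{\ln 2}-\frac{1}{2\sigma^2\ln 2}\,\mathbb{E}[\Psi_{2\sigma^2}\mid s',s]=\frac{\mu}{\ln 2}\bigl(1-\overline{\psi}(s',s)\bigr).
\]
Averaging over $q_0(s')P_{S|S^-}(s\mid s')$, subtracting from $H(K_1)=\mu h_b(1/\mu)$, and dividing by $\mu$ gives the stated expression.

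The main obstacle is the localisation step: making the conditional-independence claim rigorous so that the conditioning on outside outputs and on $T_{\ell+1}$ (rather than $K_{\ell+1}$) genuinely reduces to the two-segment experiment without leaking information about $K_\ell$.
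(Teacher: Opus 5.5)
Your proposal is correct and follows essentially the paper's route: a chain rule over segments, a genie that reveals $T_{\ell+1}$ so that each term localises to the two-segment window $(T_{\ell-1},T_{\ell+1}]$, stationarity, an $O(1)$ contribution from the last segment, and the $m=2$ version of the argument in Lemma~\ref{lemma:sdtwraterelation} for the closed form. The differences are cosmetic. You work with conditional entropies $H(K_\ell\mid\cdot)$ instead of the paper's mutual-information terms $I(K_\ell;\cdot\mid S^m,T_{\ell-1})$. You also condition additionally on $K_{\ell+2}^m$, which is harmless and makes your conditional-independence step slightly more explicit.
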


\begin{proof}
Starting from the chain rule,
\begin{align*}
    I(K^m;Y^{T_m}\mid S^m)
    &= \sum_{\ell=1}^m
    I(K_\ell;Y^{T_m}\mid S^m,K^{\ell-1}) \\
    &= \sum_{\ell=1}^m
    I(K_\ell;Y_{T_{\ell-1}+1}^{T_m}
    \mid S^m,T_{\ell-1}).
\end{align*}
For each $\ell\leq m-1$, we append the neighbouring jump time
$T_{\ell+1}$ to the observations. Once $T_{\ell+1}$ is revealed, the
outputs following $T_{\ell+1}$ are conditionally independent of
$K_\ell$. Therefore,
\begin{align*}
    I(K^m;Y^{T_m}\mid S^m) &\leq
    \sum_{\ell=1}^{m-1}
    I\!\left(
        K_\ell;
        {T_{\ell+1}},Y_{T_{\ell-1}+1}^{T_{\ell+1}}
        \,\middle|\,
        S^m,T_{\ell-1}
    \right) + {H(K_m\mid S_m)} \\
    &=
    (m-1)
    I(K_1;{T_2},Y^{T_2}\mid S_1,S_2)
    +{H(K_m\mid S_m)} \\
    &=
    \mu(m-1)\overline{R}_{\rm seg}
    +{H(K_m\mid S_m)},
\end{align*}
where the second equality follows from stationarity under shifts in the
segment index and conditional independence. It follows that
\begin{align*}
    I_T(K;Y\mid S)
    &\leq
    \lim_{m\rightarrow\infty}
    \left(
        \mu\frac{m-1}{m}\overline{R}_{\rm seg}
        +{\frac{1}{m}H(K_m\mid S_m)}
    \right) =\mu\overline{R}_{\rm seg}.
\end{align*}
Substituting this upper bound on the segmentation term into the rate
decomposition yields \eqref{eq:LB_eq}.

For the expression of $\overline{R}_{\rm seg}$, we have
\begin{align*}
    \overline{R}_{\rm seg}
    &=
    \frac{1}{\mu}
    I(K_1;T_2,Y^{T_2}\mid S_1,S_2) =
    \frac{1}{\mu}
    \left[
        H(K)
        -H(K_1\mid Y^{T_2},S_1,S_2,T_2)
    \right].
\end{align*}
Conditioning further on $(S_1,S_2)=(s',s)$ gives
\begin{align}
    H(K_1\mid Y^{T_2},S_1=s',S_2=s,T_2)
    =
    \frac{\mu}{\ln 2}
    \bigl(1-\overline{\psi}(s',s)\bigr),\label{eq:two-segment-ent}
\end{align}
and averaging over $(S_1,S_2)$ completes the proof.
\end{proof}

\begin{corollary}[Dependence on jump distance]
\label{cor:psi_depends_on_jump_distance}
The quantity $\overline{\psi}(s',s)$ depends on the state pair $(s',s)$ only through the jump distance $J(s',s)$. That is, there exists a scalar function, again denoted by $\overline{\psi}$, such that
\begin{align}
    \overline{\psi}(s',s)=\overline{\psi}\bigl(J(s',s)\bigr).
\end{align}
{
Moreover, $\overline{\psi}(J)$ is nondecreasing in $J\geq0$.
}
Consequently, the bound in Theorem~\ref{thm:jump_reliability_lower_bound} depends on the channel only through the jump matrix
\begin{align}
    J=[J(s',s)]_{s',s\in\Omega}.
\end{align}
\end{corollary}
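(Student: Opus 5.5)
Write $a=f(s')$, $b=f(s)$ and $\Delta=b-a$. Conditioned on $S_1=s'$, $S_2=s$, the output is $Y^{T_2}=(a,\ldots,a,b,\ldots,b)+N^{T_2}$, with $K_1$ copies of $a$ and $K_2$ copies of $b$. Here $K_1,K_2$ are i.i.d. geometric and independent of the noise, so their law does not depend on $(s',s)$.

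For a two-symbol input $(a,b)$ and an output of length $t$, the admissible paths in $\Pi_{2,t}$ are indexed by a split point $k\in\{1,\ldots,t-1\}$. The path cost is
\begin{align*}
C_k=\sum_{j\le k}(y_j-a)^2+\sum_{j>k}(y_j-b)^2 .
\end{align*}

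\textbf{Step 1: translation invariance.} Subtract $a$ from every sample, i.e.\ set $y'_j=y_j-a$. All path costs are unchanged, so $\Psi_{2\sigma^2}((a,b),y^t)=\Psi_{2\sigma^2}((0,\Delta),y'^t)$. In distribution, $y'$ is the noisy two-segment signal with levels $(0,\Delta)$.

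\textbf{Step 2: reflection invariance.} Replace $y'$ by $-y'$. This maps the problem to levels $(0,-\Delta)$, and since $N\stackrel{d}{=}-N$ the joint law of $(K_1,K_2,\text{output})$ is preserved. Hence the conditional expectation defining $\overline{\psi}(s',s)$ is a function of $|\Delta|=J(s',s)$ alone. This defines the scalar function $\overline{\psi}(J)$.

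\textbf{Step 3: the jump-matrix statement.} Substituting $\overline{\psi}(J(s',s))$ into the formula of Theorem~\ref{thm:jump_reliability_lower_bound} shows that $\overline{R}_{\rm seg}$ depends on $f$ only through the jump matrix. The weights $q_0(s')P_{S|S^-}(s\mid s')$ come from the source and do not involve the channel.

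\textbf{Step 4: monotonicity (the main obstacle).} Use equation~\eqref{eq:two-segment-ent}:
\begin{align*}
\overline{\psi}(J)=1-\frac{\ln 2}{\mu}\,H\bigl(K_1\mid Y^{T_2},T_2;J\bigr).
\end{align*}
So it suffices to show that the conditional entropy of the jump time is nonincreasing in $J$. This should follow from a degradation argument.

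Fix $J<J'$ and consider the observation model at $J'$, with outputs $y_j=J'\mathbb{1}[j>K_1]+N_j$. Rescaling gives $(J/J')y_j=J\mathbb{1}[j>K_1]+(J/J')N_j$. Now add independent Gaussian noise of variance $\sigma^2(1-J^2/J'^2)$ to each sample. The result has exactly the law of the level-$J$ observation given $(K_1,T_2)$.

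This construction acts sample-wise, with the same randomisation given $T_2$. Therefore $K_1\to(T_2,Y_{J'})\to(T_2,\tilde Y)$ is a Markov chain, with $(T_2,\tilde Y)\stackrel{d}{=}(T_2,Y_J)$ jointly with $K_1$. The data-processing inequality then gives $H(K_1\mid T_2,Y_{J'})\le H(K_1\mid T_2,Y_J)$, and hence $\overline{\psi}(J')\ge\overline{\psi}(J)$.

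The delicate point is that $\Psi$ itself is evaluated at the true levels, not the degraded ones. I will avoid monotonicity arguments on $\Psi$ directly and route everything through the entropy identity \eqref{eq:two-segment-ent}. I will also check that this identity holds pointwise in $(s',s)$ with the two-segment normalisation $2\mu\sigma^2$, by the same computation as in Lemma~\ref{lemma:sdtwraterelation}.
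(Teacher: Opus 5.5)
Your proposal is correct and follows essentially the same route as the paper: translation and common-reflection invariance of the squared cost, together with the symmetry of the Gaussian noise, reduce $\overline{\psi}(s',s)$ to a function of $J(s',s)$, and monotonicity follows from the same degradation $\widetilde{Y}=\alpha Y+W$ with $\alpha=J/J'$, combined with the data-processing inequality and \eqref{eq:two-segment-ent}. Your plan to verify \eqref{eq:two-segment-ent} pointwise in $(s',s)$ is a useful addition, since the paper only asserts it. The verification uses that the geometric prior is uniform over split points given $T_2$ and that $\mathbb{E}[C_{K_1}]=2\mu\sigma^2$.
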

\begin{proof}
Let \(a=f(s')\), \(b=f(s)\), and \(d=b-a\). Conditioned on
\((S_1,S_2)=(s',s)\), \(Y^{T_2}\) is generated by a two-segment signal
with levels \((a,b)\). For every \(c\in\mathbb{R}\), translation invariance
of the squared Euclidean cost gives
{\(\Psi_{2\sigma^2}((a+c,b+c),Y^{T_2}+c)
=\Psi_{2\sigma^2}((a,b),Y^{T_2})\)}, so the dependence reduces to \(d\).
After taking \(c=-a\), common reflection similarly gives
{\(\Psi_{2\sigma^2}((0,-d),a-Y^{T_2})
=\Psi_{2\sigma^2}((0,d),Y^{T_2}-a)\)}.
Since the Gaussian noise is symmetric, the expected functional is the same
for \(d\) and \(-d\), and hence depends only on \(J=|d|\).

{
To show monotonicity, let \(0\leq J_1\leq J_2\) and
\(\alpha=J_1/J_2\), with \(J_2=0\) being trivial. For levels
\((0,J_2)\), define
\(\widetilde{Y}^{T_2}=\alpha Y^{T_2}+W^{T_2}\), where
\(W_i\sim\mathcal{N}(0,(1-\alpha^2)\sigma^2)\) independently.
Then \(\widetilde{Y}^{T_2}\) has the two-segment channel law with jump
\(J_1\), and
\(K_1\to(T_2,Y^{T_2})\to(T_2,\widetilde{Y}^{T_2})\).
The data-processing inequality \cite{CoverThomas}, together with
\eqref{eq:two-segment-ent}, therefore gives
\(\overline{\psi}(J_1)\leq\overline{\psi}(J_2)\).}
\end{proof}


Theorem~\ref{thm:jump_reliability_lower_bound} admits a natural genie-aided interpretation. The quantity $\overline{R}_{\rm seg}$ is a first-order upper bound on the segmentation-rate term obtained by restricting the soft alignment decoder to a two-segment observation model and revealing the neighbouring jump times as side information. More precisely, the soft alignment decoder estimates each duplication count $K_\ell$ from the two-segment observation block spanning segments $\ell$ and $\ell+1$, conditioned on the adjacent jump times that delimit this block. {Since additional side information cannot increase the residual uncertainty, $\overline{R}_{\rm seg}$ upper bounds the true segmentation penalty.}

Although $\overline{R}_{\rm seg}$ does not generally admit a closed-form
expression, it is substantially more tractable than the full segmentation-rate
term. In particular, its evaluation reduces to averaging a two-segment
quantity over state pairs $(s',s)\in\Omega^2$, rather than analysing the full
global alignment problem over an unbounded sequence of segments. Consequently,
the bound can be computed accurately by spatial averaging with relatively
modest complexity. {For each state pair, the local Soft-DTW computation involves
only two input segments and therefore has time complexity \(O(T_2)\).
Since \(\mathbb{E}[T_2]=2\mu\), its expected time complexity is
\(O(\mu)\), independent of the sequence length \(m\), in contrast to the
\(O(\mu m^2)\) expected time complexity of the global soft-alignment term.}

The main structural consequence of the theorem is given by Corollary~\ref{cor:psi_depends_on_jump_distance}, which identifies the jump distance $J(s',s)$ as the sole state-dependent parameter governing the local reliability term. Specifically, the corollary shows that $\overline{\psi}(s',s)=\overline{\psi}\bigl(J(s',s)\bigr)$, so the first-order segmentation penalty depends on the state pair $(s',s)$ only through the corresponding entry of the jump matrix. This establishes a direct relation between the achievable information rate and the geometry induced by the level mapping $f$. Table~\ref{table:psi_depends_on_jump_distance} shows \(\overline{\psi}(s',s)\) for selected jump distances and noise levels with \(\mu=10\) and \(\mu=2\). { From $H(K_1\mid T_2) \geq H(K_1\mid Y^{T_2},S_1,S_2,T_2) \geq 0$, and the entropy relation used in the proof of Corollary~\ref{cor:sdtw_ineq}, we obtain
\begin{align}
1-\frac{\ln(2)}{\mu}H(K_1\mid T_2)
\leq
\overline{\psi}(s',s)
\leq 1,\notag
\end{align}
where
\begin{align}
H(K_1\mid T_2)
=
\mathbb{E}\left[\log_2(T_2-1)\right].\notag
\end{align}
At high jump-to-noise ratios, \(\overline{\psi}(s',s)\) approaches its upper bound of \(1\), whereas at low jump-to-noise ratios it approaches the lower bounds \(0.7323\) and \(0.5534\) for \(\mu=10\) and \(\mu=2\), respectively.
}

{
The first-order construction can be extended to higher-order bounds by
enlarging the observation block and withholding more neighbouring jump times
from the genie. Such extensions may yield tighter bounds, but at the cost of
substantially increased computational complexity and reduced structural
transparency.}
By contrast, the first-order bound remains computationally feasible and depends
on a single explicitly identifiable geometric parameter, namely the jump
distance.

\begin{table}[t]
\centering
\caption{Values of $\overline{\psi}(s',s)$ for selected jump distances
$J(s',s)$ and noise levels $\sigma$. Diagonal entries are bold.}
\label{table:psi_depends_on_jump_distance}

\footnotesize
\renewcommand{\arraystretch}{1.05}

\begin{minipage}[t]{0.44\columnwidth}
\centering
\textit{(a) $\mu=10$}\\[2pt]

\begin{tabular*}{\linewidth}{
@{\extracolsep{\fill}}
c*{6}{S[table-format=1.2,detect-weight=true]}
@{}
}
\toprule
& \multicolumn{6}{c}{$J(s',s)$} \\
\cmidrule(lr){2-7}
{$\sigma$}
& {0.00} & {0.20} & {0.40} & {0.60} & {0.80} & {1.00} \\
\midrule
0.00 & {\bfseries 0.82} & 1.00 & 1.00 & 1.00 & 1.00 & 1.00 \\
0.20 & 0.73 & {\bfseries 0.82} & 0.92 & 0.97 & 0.99 & 1.00 \\
0.40 & 0.73 & 0.76 & {\bfseries 0.82} & 0.88 & 0.92 & 0.94 \\
0.60 & 0.73 & 0.75 & 0.78 & {\bfseries 0.82} & 0.86 & 0.89 \\
0.80 & 0.73 & 0.74 & 0.76 & 0.79 & {\bfseries 0.82} & 0.85 \\
1.00 & 0.73 & 0.74 & 0.75 & 0.78 & 0.80 & {\bfseries 0.82} \\
\bottomrule
\end{tabular*}
\end{minipage}%
\hspace{0.08\columnwidth}%
\begin{minipage}[t]{0.44\columnwidth}
\centering
\textit{(b) $\mu=2$}\\[2pt]

\begin{tabular*}{\linewidth}{
@{\extracolsep{\fill}}
c*{6}{S[table-format=1.2,detect-weight=true]}
@{}
}
\toprule
& \multicolumn{6}{c}{$J(s',s)$} \\
\cmidrule(lr){2-7}
{$\sigma$}
& {0.00} & {0.20} & {0.40} & {0.60} & {0.80} & {1.00} \\
\midrule
0.00 & {\bfseries 0.64} & 1.00 & 1.00 & 1.00 & 1.00 & 1.00 \\
0.20 & 0.55 & {\bfseries 0.64} & 0.80 & 0.91 & 0.97 & 0.99 \\
0.40 & 0.55 & 0.58 & {\bfseries 0.64} & 0.72 & 0.80 & 0.86 \\
0.60 & 0.55 & 0.56 & 0.60 & {\bfseries 0.64} & 0.69 & 0.75 \\
0.80 & 0.55 & 0.56 & 0.58 & 0.61 & {\bfseries 0.64} & 0.68 \\
1.00 & 0.55 & 0.56 & 0.57 & 0.59 & 0.61 & {\bfseries 0.64} \\
\bottomrule
\end{tabular*}
\end{minipage}

\end{table}

\section{Numerical Results}\label{sec:numerical_results}
\subsection{Binary-Input AWGN Channel with Geometric Duplication}
{
Firstly, we consider a binary-input noisy geometric duplication channel for
which sticky-channel benchmarks are available in the zero-noise limit. This
provides a simple benchmark for the uniform zero-noise bound and also
illustrates a regime in which repeated identical levels make jump distance a
comparatively weak measure of synchronisability.} In particular, we consider a binary-input noisy geometric channel with state-space $\Omega=\{0,1\}$, binary signalling with $f(0)=-1$ and $f(1)=1$, and duplication probability $0.1$ ($\mu = 10/9$). The channel is driven by an i.i.d. source of Bernoulli random variables ${\rm Ber}(1/2)$. In this scenario, the sample-level Markov transition matrix is 
\[
\renewcommand{\arraystretch}{1.3}
\setlength{\arraycolsep}{8pt}
\begin{aligned}
R
&= \frac{1}{10}
\begin{bmatrix}
1 & 0 \\
0 & 1
\end{bmatrix}
+ \frac{9}{10}
\begin{bmatrix}
\frac{1}{2} & \frac{1}{2} \\
\frac{1}{2} & \frac{1}{2}
\end{bmatrix} =
\begin{bmatrix}
\frac{11}{20} & \frac{9}{20} \\
\frac{9}{20} & \frac{11}{20}
\end{bmatrix}.
\end{aligned}
\]
with stationary distribution $(q_0(0),q_0(1))=(1/2,1/2)$, which has entropy rate $H(Z_2|Z_1)=0.9928$ bits/sample.

Information rates for this channel are shown in Fig.~\ref{fig:AWGNgeom}. The rates are uniformly bounded above by $C_{\rm sticky}=0.7141$ bits/symbol (light grey), the capacity of the equivalent sticky channel with geometric duplications at $\sigma=0$ \cite{Mitzenmacher2007}, which is observed to be very loose across all noise levels. However, it is slightly improved at higher noise levels by the first term in the decomposition, namely the rate of the auxiliary ISI channel (orange).
The rate $I_T(S;Y)$ (blue) equals $I_{\rm sticky}=0.7128$ bits/symbol (dark grey) when $\sigma=0$ ($\mathsf{SNR}=\infty$), and drops to almost zero at $\sigma=3$ ($\mathsf{SNR}=-9.54$ dB). Observe that the uniform zero-noise lower bound (green) is very tight at high SNRs and loosens at low SNRs. Since this lower bound is an achievable rate, it is also a lower bound on capacity, which may be tightened by optimising over Markov sources with longer memory (rather than i.i.d.). 

For reference, Fig.~\ref{fig:AWGNgeom} also shows the jump-reliability lower bound (purple). This bound is not expected to be tight for channels of this type, where repeated runs of identical levels are frequent and many adjacent jump distances are zero. In this regime, the jump distance distribution provides a weak description of synchronisability, since segment boundaries cannot be reliably inferred from level changes alone. This behaviour is reflected in the numerical results: the zero-noise bound is mildly tighter at low noise levels, whereas at higher noise levels it becomes looser than even the jump-reliability bound. Conversely, observe that the upper bound (orange) is very loose at high SNRs and becomes slightly tighter at lower SNRs. {This upper bound can be tightened at high SNRs using the genie-aided bound $I_Y(S;Y,K)$ \cite{McBain2024a}, which is also an achievable rate of an auxiliary ISI channel.}

\begin{figure}
    \centering
    \includegraphics[width=0.8\linewidth]{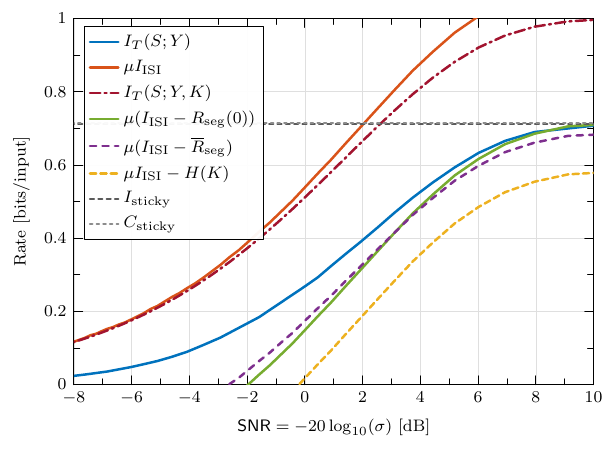}
    \caption{Bounds on achievable information rates of a noisy geometric duplication channel with $\pm 1$ binary signalling and mean duration $\mu=10/9$, driven by an i.i.d. Bernoulli-$1/2$ source. }
    \label{fig:AWGNgeom}
\end{figure}

\subsection{Noisy Nanopore Channels with Geometric Duplication}
Next, we evaluate achievable rates for noisy nanopore channels using parameters based on real Oxford nanopore sequencers. Specifically, the level mapping $f$ is chosen from ONT-released pore models for several nanopore generations. Each mapping is normalised to have zero mean and unit variance, so that all nanopores have the same average signal power, $\mathbb{E}[X^2_1]=1$. The corresponding ONT sequencing devices operate at a sampling rate of $f_s=4$ kHz, with an average base translocation speed of $\nu$ bases per second (bps). Therefore, the average number of samples per base is 
\begin{align}
    \mu = \frac{f_s}{\nu} \quad \text{[samples per base]}.\notag
\end{align}
Moreover, we analyse the following Oxford nanopores: R9.2 with $\tau=6$ and $\nu=250$ bps, R9.4 with $\tau=6$ and $\nu =450$ bps, R10.4.1 with $\tau=9$ and $\nu=260$ bps, and R10.4.1 with $\tau=9$ and $\nu=400$ bps. Since the translocation speed alters the time cost, achievable rate  in terms of bits per base does not necessarily allow for a fair comparison, since a slower translocation speed means there are more samples at the sequencer output. Therefore, in the following achievable rate analysis and comparison of nanopores, we also use the {\em throughput}
\begin{align}
        \nu I\left(\frac{f_s}{\nu}\right) \quad \text{[bits per second]},\notag
\end{align}
where $I(\mu)$ is the achievable rate of a noisy nanopore channel with $\mu$ samples per base on average. {Throughput therefore reflects both the direct scaling with translocation speed
and changes in the per-base information rate caused by the different average
numbers of duplications and their resulting synchronisability.} For the following numerical results, we assume an i.i.d. uniform source. 

Fig.~\ref{fig:nanopore} shows the achievable rates in bits per second for the R9 and R10 nanopores under both fast and slow translocation-speed settings, across a range of SNR values relevant to nanopore sequencing, with $\sigma \in [0.1,0.4]$. {
The solid curves denote the jump-reliability lower bound in
Theorem~\ref{thm:jump_reliability_lower_bound}, evaluated over the full SNR
range, while the substantially more demanding true information rates are
evaluated only at representative points indicated by the markers. For
comparison, we also show the elementary lower bound
$\mu I_{\rm ISI}-H(K)$ and the genie-aided upper bound
$I_T(S;Y,K)$ \cite{McBain2024a}. The elementary bound is substantially
looser and becomes vacuous at lower SNRs, while the genie-aided upper bound
is relatively tight at high SNR but loosens as the noise increases.
} The close agreement between the markers and the solid curves confirms that the lower bound is tight at typical nanopore sequencing noise levels, and can therefore serve as a tractable proxy for nanopore sequencing analysis. As the noise level increases, the bound becomes looser, suggesting that it is tight when the jump distances are sufficiently large relative to the noise level.

Using the jump-reliability lower bound, we can compare the theoretical performance of the nanopores for DNA data storage readout. As expected, faster translocation speeds lead to higher throughputs. For example, at $\mathsf{SNR}=12$ dB, the achievable rates for R9 and R10 increase by $304.24$ and $200.04$ bits/s, respectively. However, faster translocation speeds may also increase the effective noise level by introducing greater variability in the measured levels relative to the assumed pore model, thereby reducing this gain in practice. In addition, when the rates are normalised per input base, the faster translocation-speed settings actually exhibit reductions of $0.14$ and $0.12$ bits/base for R9 and R10, respectively.

Moreover, Fig.~\ref{fig:nanopore} shows that R9 outperforms R10.4.1 at the faster translocation-speed setting. This is partly due to the higher translocation speed used for R9.4, namely $450$ bps compared with $400$ bps for R10.4.1. However, the jump distance statistics in Table~\ref{tab:jump_statistics} also indicate that the R9 pore models have a more favourable jump distribution, with larger median jump distances and smaller small-jump probabilities. In particular, R10.4.1 at $260$ bps has the largest fraction of small jumps, suggesting poorer synchronisability and hence a larger alignment penalty. {Nevertheless, all four ONT models have relatively few near-zero jumps, placing them well outside the repeated-level  regime exhibited by the binary-input geometric-duplication example in Fig.~\ref{fig:AWGNgeom}.} This behaviour is partly expected because all pore levels are normalised to unit average power. Since R9 has fewer states than R10.4.1, its levels are less densely packed in the normalised signal range, resulting in larger typical jumps and fewer small-jump transitions. These observations suggest that the observed rate gaps are driven, at least in part, by differences in synchronisability rather than solely by ISI, and should be interpreted within the scope of the present noisy duplication model.

\begin{figure}
        \centering
        \includegraphics[width=0.8\textwidth]{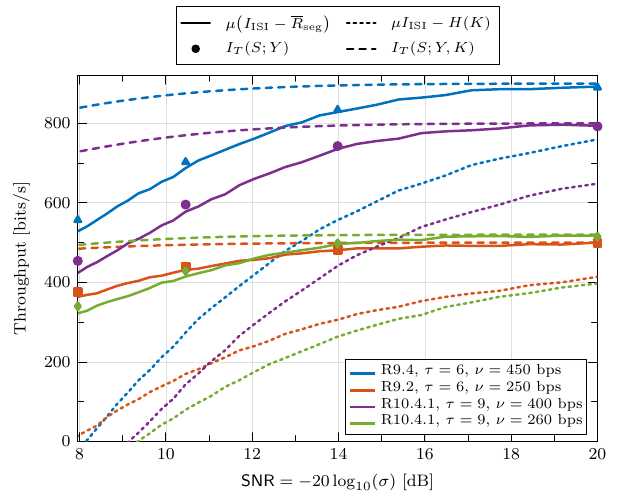}
    \caption{{Bounds on achievable information rates in bits per second
(``throughput'') for Oxford nanopore sequencers modelled as noisy nanopore
channels using ONT pore models \cite{kmermodelsONT}, with
pore levels normalised to unit variance and an i.i.d.\ uniform source.}}
    \label{fig:nanopore}
\end{figure}

\begin{table}[t]
\centering
\caption{Jump distance statistics for ONT pore models \cite{kmermodelsONT}.}
\label{tab:jump_statistics}

\footnotesize
\renewcommand{\arraystretch}{1.05}

\begin{tabular*}{0.92\columnwidth}{
@{\extracolsep{\fill}}
l c c *{4}{S[table-format=1.4]}
@{}
}
\toprule
Pore model
& $\nu$
& $\tau$
& \multicolumn{1}{c}{Mean}
& \multicolumn{1}{c}{Median}
& \multicolumn{1}{c}{$\mathbb{P}(J(S_1,S_2)<0.1)$}
& \multicolumn{1}{c}{$\mathbb{P}(J(S_1,S_2)<0.5)$} \\
\midrule
R9.4    & $450$ bps & 6 & 1.0796 & 0.9859 & 0.0466 & 0.2429 \\
R9.2    & $250$ bps & 6 & 1.0829 & 0.9968 & 0.0488 & 0.2433 \\
R10.4.1 & $260$ bps & 9 & 0.9815 & 0.8227 & 0.0646 & 0.3153 \\
R10.4.1 & $400$ bps & 9 & 1.0089 & 0.8934 & 0.0514 & 0.2894 \\
\bottomrule
\end{tabular*}
\end{table}

\section{Conclusion}

This paper developed an information rate decomposition for noisy geometric duplication channels, motivated by nanopore DNA sequencing. The decomposition separates the achievable rate into an auxiliary ISI channel term and a soft alignment penalty, thereby isolating the effects of channel memory and synchronisation uncertainty. We showed that the soft alignment term satisfies the strong AEP and is directly related to the uncertainty in the unknown segmentation.

The main practical consequence is the jump-reliability lower bound, which links achievable rate to the distribution of level jumps between adjacent nanopore states. This bound provides a tractable geometric proxy for information rate analysis and enables theoretical comparisons of ONT pore models that would be difficult using direct information rate computation alone. In particular, it allows modern R10.4.1 pore models, with state spaces of size $4^9$, to be compared alongside earlier R9 models. The numerical results show that the bound is tight in SNR regimes relevant to nanopore sequencing and reveal a geometric mechanism for synchronisability: larger and more frequent level jumps reduce the alignment penalty and increase achievable rate.

Overall, the proposed framework turns the basic intuition that larger nanopore level jumps improve synchronisation into an information-theoretic principle, linking the geometry of the nanopore level mapping directly to achievable rates for DNA data storage readout.

\appendices
\renewcommand{\thelemma}{A\arabic{lemma}} \setcounter{lemma}{0}

\section{Random-Length Entropy Lemma}
\label{appendix:entropy_length}

\begin{lemma}\label{lemma:ent_obs_len}
Suppose \(K_1,K_2,\ldots\) are i.i.d. geometric random variables with mean \(\mu<\infty\), and let
\begin{align}
    T_m = \sum_{\ell=1}^m K_\ell .\notag
\end{align}
Then
\begin{align}
    \lim_{m\to\infty}\frac{1}{m}H(T_m)=0.\notag
\end{align}
\end{lemma}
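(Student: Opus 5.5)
The plan is to bound $H(T_m)$ by the entropy of a discrete distribution whose variance is known, and then use that the variance grows only linearly in $m$. The required facts are already recorded in Section~\ref{sec:channel_model}: $T_m-m\sim\mathrm{NegBin}(m,1/\mu)$ is integer-valued, with $\mathbb{E}[T_m]=m\mu$ and $\mathrm{Var}(T_m)=m\mu(\mu-1)$. Since the entropy is invariant under the shift $T_m\mapsto T_m-m$, it suffices to show that an integer-valued random variable with variance $v$ has entropy $O(\log v)$.

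\textbf{Step 1 (Gaussian bound).} I will use the standard bound for integer-valued random variables (Cover--Thomas, Theorem~8.6.5 / the discrete-to-continuous smoothing argument). Add an independent $U\sim\mathrm{Unif}[0,1)$ to $T_m$. The density of $T_m+U$ is piecewise constant with values $\mathbb{P}(T_m=t)$ on unit intervals, so its differential entropy equals $H(T_m)$ in bits. Its variance is $\mathrm{Var}(T_m)+1/12$, and the Gaussian maximum-entropy property then gives
\begin{align*}
H(T_m)\le \tfrac12\log\!\bigl(2\pi e\,(m\mu(\mu-1)+\tfrac1{12})\bigr).
\end{align*}

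\textbf{Step 2 (limit).} The right-hand side is $O(\log m)$, so dividing by $m$ and letting $m\to\infty$ gives $\tfrac1m H(T_m)\to0$. Nonnegativity of entropy supplies the lower bound.

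The only point needing care is Step 1: identifying $h(T_m+U)$ with $H(T_m)$ and confirming the finite variance. Both are immediate here, since geometric durations have all moments. As a fallback that avoids differential entropy, I would take the maximum entropy of a distribution on $\mathbb{N}_{\ge m}$ with mean $m\mu$. This is the geometric law, with entropy $m\mu\, h_b(1/\mu)$, which is linear and hence insufficient. I would therefore instead split $T_m$ according to whether $|T_m-m\mu|\le m^{2/3}$, using a Chebyshev bound together with $H(X)\le h_b(p)+p\log|\mathcal{A}_1|+(1-p)H(\cdot)$. The Gaussian route is cleaner, so I will present that one.
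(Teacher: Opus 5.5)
Your Gaussian-smoothing argument is correct and is exactly the paper's proof, which invokes Massey's bound $H(X)\le\frac{1}{2}\log\bigl(2\pi e(\mathrm{Var}(X)+1/12)\bigr)$ for integer-valued $X$, together with $\mathrm{Var}(T_m)=m\mu(\mu-1)$, to get $H(T_m)=O(\log m)$. As an aside, the fallback you dismissed would also have worked: the maximum-entropy law on $\mathbb{N}_{\geq m}$ with mean $m\mu$ is $T=m-1+G$, with $G$ geometric on $\mathbb{N}_{\geq 1}$ of mean $M=m(\mu-1)+1$, and its entropy $M\,h_b(1/M)\le\log M+\log e=O(\log m)$ is not linear in $m$ (the quantity $m\mu\,h_b(1/\mu)$ you quoted is $H(K^m)$, not this maximum).
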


\begin{proof}
The random variable \(T_m\) is negative binomial with mean \(\mu m\) and variance \(m\,\mu(\mu-1)\). For any integer-valued random variable \(X\) with variance $\mathrm{Var}(X)$, we have \cite{Massey1988}
\begin{align*}
    H(X)\le \frac{1}{2}\log\!\bigl(2\pi e(\mathrm{Var}(X)+1/12)\bigr).
\end{align*}
Applying this to \(X=T_m\) gives
\begin{align*}
    H(T_m)
    \le
    \frac{1}{2}\log\!\bigl(2\pi e(m\,\mu(\mu-1)+1/12)\bigr)
    = O(\log m).
\end{align*}
Dividing by \(m\) and letting \(m\to\infty\) yields the result.
\end{proof}

\bibliographystyle{IEEEtran}
\bibliography{refs}

@ARTICLE{Sabary2024,
  author={Sabary, Omer and Kiah, Han Mao and Siegel, Paul H. and Yaakobi, Eitan},
  journal={IEEE Transactions on Molecular, Biological, and Multi-Scale Communications}, 
  title={Survey for a Decade of Coding for {DNA} Storage}, 
  year={2024},
  volume={10},
  number={2},
  pages={253-271},
  doi={10.1109/TMBMC.2024.3403488}}

@ARTICLE{Milenkovic2024,
  author={Milenkovic, Olgica and Pan, Chao},
  journal={IEEE Transactions on Communications}, 
  title={{DNA}-Based Data Storage Systems: A Review of Implementations and Code Constructions}, 
  year={2024},
  volume={72},
  number={7},
  pages={3803-3828},
  doi={10.1109/TCOMM.2024.3367748}}

@ARTICLE{BITSDNA,
  author={McBain, Brendon and Viterbo, Emanuele},
  journal={IEEE BITS the Information Theory Magazine}, 
  title={An Information-Theoretic Approach to Nanopore Sequencing for {DNA} Storage}, 
  year={2023},
  volume={3},
  number={3},
  pages={95-108},
  doi={10.1109/MBITS.2024.3355883}}

@phdthesis{McBain2025a,
    author  = {McBain, Brendon},
    title   = {Coding Synthetic {DNA} for Nanopore Sequencing},
    school  = {Monash University},
    year    = {2025},
    type    = {PhD dissertation},
    url     = {https://bridges.monash.edu/articles/thesis/Coding_Synthetic_DNA_for_Nanopore_Sequencing/28139777}
}

@ARTICLE{McBain2025JSAIT,
  author={McBain, Brendon and Viterbo, Emanuele},
  journal={IEEE Journal on Selected Areas in Information Theory}, 
  title={Achievable Rates of Nanopore-Based {DNA} Storage}, 
  year={2025},
  volume={6},
  number={},
  pages={261-269},
  doi={10.1109/JSAIT.2025.3598756}}

@misc{ONT,
  author = {{Oxford Nanopore Technologies}},
  url = {https://nanoporetech.com/}
}

@article{Mao2018,
  title={Models and Information-Theoretic Bounds for Nanopore Sequencing},
  author={Wei Mao and Suhas N. Diggavi and Sreeram Kannan},
  journal={IEEE Transactions on Information Theory},
  year={2018},
  volume={64},
  pages={3216-3236}
}

@article{McBain2022,
  title={Finite-State Semi-{M}arkov Channels for Nanopore Sequencing},
  author={Brendon McBain and Emanuele Viterbo and James Saunderson},
  journal={IEEE International Symposium on Information Theory (ISIT)},
  year={2022},
  pages={216-221}
}

@INPROCEEDINGS{McBain2026,
  author={McBain, Brendon and Viterbo, Emanuele},
  booktitle={2026 IEEE International Symposium on Information Theory (ISIT)}, 
  title={Information Rate Decomposition for Noisy Geometric Duplication Channels}, 
  year={2026},
  volume={},
  number={},
  pages={1-6},
  doi={10.1109/ISIT62367.2026.11653808}}

@INPROCEEDINGS{McBain2024b,
  author={McBain, Brendon and Saunderson, James and Viterbo, Emanuele},
  booktitle={IEEE International Symposium on Information Theory (ISIT)}, 
  title={On Noisy Duplication Channels with {M}arkov Sources}, 
  year={2024},
  volume={},
  number={},
  pages={3438-3443},
  doi={10.1109/ISIT57864.2024.10619201}}

@ARTICLE{Rameshwar2025,
  author={Rameshwar, V. Arvind and Weinberger, Nir},
  journal={IEEE Journal on Selected Areas in Information Theory}, 
  title={On Achievable Rates Over Noisy Nanopore Channels}, 
  year={2025},
  volume={6},
  number={},
  pages={270-282},
  doi={10.1109/JSAIT.2025.3598773}}

@ARTICLE{Hamoum2023,
  author={Hamoum, Belaid and Dupraz, Elsa},
  journal={IEEE Access}, 
  title={Channel Model and Decoder With Memory for {DNA} Data Storage With Nanopore Sequencing}, 
  year={2023},
  volume={11},
  number={},
  pages={52075-52087},
  doi={10.1109/ACCESS.2023.3278975}
}

@ARTICLE{Mitzenmacher2007,  author={Mitzenmacher, Michael},  journal={IEEE Transactions on Information Theory},   title={Capacity Bounds for Sticky Channels},   year={2008},  volume={54},  number={1},  pages={72-77},  doi={10.1109/TIT.2007.911291}}

@ARTICLE{McBain2024a,
  author={McBain, Brendon and Viterbo, Emanuele and Saunderson, James},
  journal={IEEE Transactions on Information Theory}, 
  title={Information Rates of the Noisy Nanopore Channel}, 
  year={2024},
  volume={70},
  number={8},
  pages={5640-5652},
  doi={10.1109/TIT.2024.3404002}}

@misc{kmermodelsONT,
  author = {{Oxford Nanopore Technologies}},
  title  = {{K-mer Models}},
  howpublished = {\url{https://github.com/nanoporetech/kmer_models}},
  note   = {Accessed: 2026-05-22}
}

@inbook{Girardin2018,
author = {Girardin, Valérie and Limnios, Nikolaos},
year = {2018},
pages = {215-252},
title = {Markov and semi-Markov processes: From random sequences to stochastic processes},
isbn = {978-3-319-97411-8},
doi = {10.1007/978-3-319-97412-5_5}
}

@article{Cuturi2017,
  author    = {Marco Cuturi and Mathieu Blondel},
  title     = {Soft-{DTW}: A Differentiable Loss Function for Time-Series},
  journal = {34th International Conference on Machine Learning (ICML)},
  year      = {2017},
  pages     = {894--903}
}

@article{Pfister2001,
  title={On the achievable information rates of finite state {ISI} channels},
  author={Henry D. Pfister and Joseph B. Soriaga and Paul H. Siegel},
  journal={GLOBECOM'01. IEEE Global Telecommunications Conference (Cat. No.01CH37270)},
  year={2001},
  volume={5},
  pages={2992-2996 vol.5}
}

@article{Arnold2001,
  title={On the information rate of binary-input channels with memory},
  author={Dieter-Michael Arnold and Hans-Andrea Loeliger},
  journal={IEEE International Conference on Communications (ICC)},
  year={2001},
  volume={9},
  pages={2692-2695 vol.9}
}

@inproceedings{Massey1988,
  title={On the entropy of integer-valued random variables},
  author={Massey, James L},
  booktitle={Proceedings of 1988 Beijing International Workshop on Information Theory},
  pages={C1.1--C1.4},
  year={1988},
  address={Beijing, China}
}

@article{Zygouras2024,
  author  = {Nikos Zygouras},
  title   = {Directed Polymers in a Random Environment: A Review of the Phase Transitions},
  journal = {Stochastic Processes and their Applications},
  volume  = {177},
  pages   = {104431},
  year    = {2024}
}

@article{Kingman1968,
author = {Kingman, J. F. C.},
title = {The Ergodic Theory of Subadditive Stochastic Processes},
journal = {Journal of the Royal Statistical Society: Series B (Methodological)},
volume = {30},
number = {3},
pages = {499-510},
doi = {https://doi.org/10.1111/j.2517-6161.1968.tb00749.x},
year = {1968}
}

@book{Gut2009,
  author    = {Gut, Allan},
  title     = {Stopped Random Walks: Limit Theorems and Applications},
  edition   = {2},
  publisher = {Springer},
  address   = {New York, NY},
  year      = {2009},
  doi       = {10.1007/978-0-387-87835-5}
}

@ARTICLE{Welter2026,
  author={Welter, Lorenz and Sokolovskii, Roman and Heinis, Thomas and Wachter-Zeh, Antonia and Rosnes, Eirik and Amat, Alexandre Graell i},
  journal={IEEE Journal on Selected Areas in Information Theory}, 
  title={An End-to-End Coding Scheme for {DNA}-Based Data Storage With Nanopore-Sequenced Reads}, 
  year={2026},
  volume={7},
  number={},
  pages={17-32},
  doi={10.1109/JSAIT.2026.3655592}}

@book{CoverThomas,
  author    = {Thomas M. Cover and Joy A. Thomas},
  title     = {Elements of Information Theory},
  edition   = {2},
  publisher = {Wiley},
  address   = {Hoboken, NJ, USA},
  year      = {2006}
}

\end{document}